\documentclass[twocolumn]{autart}    

\usepackage{amssymb,latexsym,amsfonts,amsmath}
\usepackage{wrapfig} 

\newenvironment{authorbio}[2][]{%
	\par\addvspace{1em}%
	\noindent
	\if\relax\detokenize{#1}\relax 
	\else
	\begin{wrapfigure}{l}{0.8in} 
		\vspace{-5pt} 
		\includegraphics[width=1in,height=1.25in,clip,keepaspectratio]{#1}
	\end{wrapfigure}%
	\fi
	\noindent\textbf{#2} 
}{\par\addvspace{2em}}

\usepackage{mathrsfs}

\usepackage{algorithm}
\usepackage{algcompatible}

\usepackage{natbib}

\usepackage{xspace}

\usepackage[nocomma]{optidef}

\usepackage{graphicx}
\usepackage{paralist}
\usepackage{dsfont}
\usepackage{caption}
\usepackage{booktabs}
\usepackage{subfig}
\usepackage{eso-pic}
\usepackage{epsfig}
\usepackage{mathtools}
\usepackage{xfrac}
\usepackage {url}
\usepackage{epstopdf}
\usepackage{tikz}
\usepackage{multirow}
\newcommand*{\myalign}[2]{\multicolumn{1}{#1}{#2}}
\usepackage{booktabs}       
\usepackage[utf8]{inputenc} 
\usepackage[T1]{fontenc}    
\usepackage{nicematrix}
\usepackage{IEEEtrantools}
\usetikzlibrary{arrows,calc,decorations.markings,math,arrows.meta}
\newtheorem{theorem}{Theorem}[section]
\newtheorem{lemma}[theorem]{Lemma}
\newtheorem{problem}[theorem]{Problem}

\newtheorem{definition}[theorem]{Definition}

\newtheorem{remark}[theorem]{Remark}
\newtheorem{assumption}[theorem]{Assumption}
\numberwithin{equation}{section}

\usepackage[many]{tcolorbox}
\usetikzlibrary{calc,decorations.pathreplacing, calligraphy}
\tcbuselibrary{skins}
\tikzset{mybrace/.style={thick, decorate, decoration={calligraphic brace, amplitude=1.5mm, raise=-1mm}}}

\newtcolorbox{resp}[1][]{%
	enhanced jigsaw,%
	colback=gray!5!white,%
	colframe=gray!80!black,%
	size=small,%
	boxrule=1pt,%
	halign title=flush center,%
	coltitle=black,%
	breakable,%
	drop shadow=black!50!white,%
	attach boxed title to top left={xshift=1cm,yshift=-\tcboxedtitleheight/2,yshifttext=-\tcboxedtitleheight/2},%
	minipage boxed title=3cm,%
	boxed title style={%
		colback=white,%
		size=fbox,%
		boxrule=1pt,%
		boxsep=2pt,%
		underlay={%
			\coordinate (dotA) at ($(interior.west) + (-0.5pt,0)$);
			\coordinate (dotB) at ($(interior.east) + (0.5pt,0)$);
			\begin{scope}[gray!80!black]
				\fill (dotA) circle (2pt);
				\fill (dotB) circle (2pt);
			\end{scope}
		}%
	},%
	#1%
}

\definecolor{mydarkred}{HTML}{990000}

\newcommand{\R}{{\mathbb{R}}}

\newcommand{\N}{{\mathbb{N}}}

\newcommand{\ie}{{\it i.e.}}
\newcommand{\eg}{{\it e.g.}}

\newcommand{\Let}{:=}

\DeclareFontFamily{U}{stix2bb}{}
\DeclareFontShape{U}{stix2bb}{m}{n} {<-> stix2-mathbb}{}

\NewDocumentCommand{\stixbbdigit}{m}{%
	\text{\usefont{U}{stix2bb}{m}{n}#1}%
}
\newcommand{\bbzero}{\stixbbdigit{0}}
\newcommand{\Vp}{{\dot{\mathcal V}(x)}}
\newcommand{\Vps}{{\dot{\mathcal V}_i(x_i)}}
\newcommand{\V}{{\mathcal V(x)}}
\newcommand{\Vs}{{\mathcal V_i(x_i)}}
\newcommand{\VecF}{{\mathcal F_i}}
\newcommand{\Sig}{\Sigma_i\!=\!(A_i,B_i, D_i, \mathcal F_i, \mathcal G_i)}
\newcommand{\Siginf}{\Sigma = \mathcal I(\Sigma_i)_{i\in\N^+}}
\newcommand{\SiginfK}{\Sigma_K = \mathcal I_K(\Sigma_i)_{i\in\N^+}}
\newcommand{\Siginfset}{\Sigma\!=\!(f,g, X,U)}
\newcommand{\SiginfsetK}{\Sigma_K\!=\!(f,g,K, X,U)}
\newcommand{\conblock}{\begin{bmatrix} \Psi_i(x_i)P_i^{-1}\\ \mathcal G_i(x_i)\mathbf K_i(x_i)\\ \bbzero_{\sigma_i\times n_i}\end{bmatrix}}
\newcommand{\conblockD}{\begin{bmatrix} \Psi_i(x_i)P_i^{-1}\\ \mathcal G_i(x_i)\mathbf K_i(x_i)\end{bmatrix}}

\newcommand{\Diag}[2]{%
	\underset{ #1\in\N^+}{\operatorname{diag}}(#2)%
}
\newcommand{\NPS}{\textsc{ctia-NPS}}
\newcommand{\MATLAB}{\textsc{Matlab}\xspace}
\usepackage{dsfont}

\makeatletter
\long\def\@maketablecaption#1#2{%
	\@tablecaptionsize
	\global\@minipagefalse
	\hbox to \hsize{%
		\parbox[t]{\hsize}{#1. #2}%
	}%
}
\makeatother

\makeatother
\allowdisplaybreaks
\begin{document}

\begin{frontmatter}
	
	\title{Data-Driven Global Stabilization of Unknown Infinite Networks\thanksref{footnoteinfo}}

\thanks[footnoteinfo]{A. Mironchenko has been supported by the Heisenberg grant (MI 1886/3-1) of the German Research Foundation (DFG). Corresponding author:  Mahdieh Zaker (m.zaker2@newcastle.ac.uk).}

\author[UK1]{Mahdieh Zaker}\ead{m.zaker2@newcastle.ac.uk},
\author[DU]{Andrii Mironchenko}\ead{andrii.mironchenko@uni-bayreuth.de}, \author[UK1]{Amy Nejati}\ead{amy.nejati@newcastle.ac.uk},  \author[UK1]{Abolfazl Lavaei}\ead{abolfazl.lavaei@newcastle.ac.uk} 

\address[UK1]{School of Computing, Newcastle University, United Kingdom}
\address[DU]{Department of Mathematics, University of Bayreuth, Germany}                               

\begin{keyword}  
	Infinite networks;
	data-driven control;
	uniform global asymptotic stability;
	ISS  Lyapunov functions;
	small-gain compositional reasoning;
	formal methods.
	
\end{keyword} 

\begin{abstract}   
	This paper develops a direct data-driven framework for infinite networks with unknown nonlinear polynomial subsystems, enabling the synthesis of controllers that ensure {the entire network is uniformly globally asymptotically stable (UGAS)}.
	To address scalability challenges arising from high dimensionality, we develop a data-driven approach to construct an input-to-state stable (ISS) Lyapunov function and its corresponding controller for each unknown subsystem using only a single set of \emph{noise-corrupted} input–state trajectories collected from that subsystem. Once each subsystem admits a data-driven ISS Lyapunov function, we leverage a compositional small-gain framework for \emph{infinite-dimensional} spaces to construct a global control Lyapunov function and its associated controller, thereby ensuring UGAS of the entire infinite network. The effectiveness of the proposed data-driven approach is demonstrated through three case studies, including infinite networks of spacecraft, Lorenz chaotic systems, and an academic example with a state-dependent control input matrix.
\end{abstract}

\end{frontmatter}

\section{Introduction}

We live in a world of networks, where transportation systems, electrical grids, social networks, and engineering systems are becoming increasingly interconnected and complex each year.
Establishing stability and/or safety guarantees for such large-scale networks remains a significant challenge, largely due to the limited scalability of existing analytical and control criteria.
In many practical settings, including connected vehicles, smart cities, drone formations, and traffic networks, subsystems (also referred to as agents) may dynamically join or leave the network.
As a result, the network size is often {unknown} and may {vary} over time~\citep{bamieh2002distributed, jovanovic2005ill,noroozi2021set,noroozi2022relaxed}. Modeling such systems as finite networks with fixed size therefore yields oversimplified descriptions that fail to capture the inherent complexity of practical scenarios.

To address this challenge, the concept of \emph{infinite} networks has been introduced, in which finite large-scale networks with possibly unknown size are overapproximated by \emph{countably infinite} networks of finite-dimensional subsystems~\citep{dashkovskiy2020stability}. 
A representative example is a traffic network composed of vehicles on a roadway: not only is it impractical to determine the total number of vehicles, \emph{i.e.}, no predefined upper bound exists, but the system typically involves a large and dynamically varying population of agents.

The behavior and performance of an interconnected network, whether finite or infinite, are governed by the interactions among its subsystems and their combined impact on the overall dynamics. As a result, disturbances or failures stemming from a single subsystem can propagate through the interconnections, potentially leading to cascading effects that significantly compromise overall network performance. A foundational role for systematic analysis of such propagation phenomena in interconnected systems ~\citep{nesic2002integral,Son08,isidori2013nonlinear} is played by the \emph{input-to-state stability (ISS)} framework, initially introduced by~\citet{sontag1989smooth}. 
ISS has unified internal stability with robustness to external inputs and has fundamentally shaped nonlinear systems theory by enabling systematic methods for nonlinear stabilization, robust observer design, and the stability analysis of interconnected nonlinear control systems~\citep{Mironchenko2023ISS,karafyllis2019input,mironchenko2020input,Son08}.

The ISS of subsystems is ensured if there exist so-called ISS Lyapunov functions together with corresponding Lyapunov gain functions. These gains encode the interconnection topology and quantify the influence of neighboring subsystems on each other. Having this information about the interconnection structure, the stability of the interconnection can be analyzed by employing classical ISS small-gain theory that has been developed for finite networks with nonlinear ODE components~\citep{jiang1994small, jiang1996lyapunov, dashkovskiy2007iss, dashkovskiy2010small}. 
Namely, if gains satisfy a suitable small-gain condition, stability of the overall network is guaranteed~\citep{jiang1994small, jiang1996lyapunov, dashkovskiy2007iss, dashkovskiy2010small}.

\subsection{Key Challenges and Recent Studies}

Considering the importance of ISS, significant progress has been made in extending the ISS framework to \emph{richer classes} of finite-dimensional systems, including switched, hybrid, and impulsive dynamics~\citep{mancilla2001converse,cai2005results,dashkovskiy2013input}. More recently, ISS theory has been generalized to infinite-dimensional systems, encompassing time-delay systems~\citep{pepe2006lyapunov,chaillet2023iss}, partial differential equations~\citep{karafyllis2019input}, discrete-time systems~\citep{bachmann2022nonlinear,dashkovskiy2023some}, and general evolution equations in Banach spaces~\citep{schwenninger2020input,mironchenko2020input}. A comprehensive overview of these developments is provided by~\citet{mironchenko2020input}. 

Stability criteria developed for finite networks of ISS systems cannot be transferred to infinite networks in a straightforward manner. For instance, infinite cascade interconnections of ISS systems are not necessarily ISS, in contrast to finite networks. Building on significant advances in infinite-dimensional ISS theory~\citep{mancilla2001converse,cai2005results,dashkovskiy2013input,pepe2006lyapunov,chaillet2023iss,karafyllis2019input,bachmann2022nonlinear,dashkovskiy2023some,schwenninger2020input,mironchenko2020input}, together with well-established nonlinear small-gain criteria for stability analysis of finite networks of nonlinear systems~\citep{dashkovskiy2007iss,dashkovskiy2010small}, recent works have laid the theoretical foundations for stability analysis of infinite networks. Motivated by this interplay, several studies have developed small-gain theorems specifically tailored to infinite networks; see, \emph{e.g.,} the works by~\citet{kawan2020lyapunov,dashkovskiy2019stability,dashkovskiy2020stability,mironchenko2021nonlinear,KMZ23}.

Beyond the intrinsic challenges posed by infinite networks, many real-world applications are further complicated by the absence of accurate mathematical models for the underlying subsystems. As a result, {despite the significance of the aforementioned results, their applicability remains limited by the strong assumption that the system dynamics are known}. To address the challenges posed by unknown system models, data-driven control strategies have emerged and can be broadly classified into two categories: (i) indirect and (ii) direct approaches. \emph{Indirect} data-driven methods seek to reconstruct the unknown dynamics via system identification techniques; however, achieving an accurate model can be computationally intensive, particularly for high-dimensional complex nonlinear systems~\citep{formentin2014comparison}. Moreover, even when a reliable model is obtained, the subsequent design of controllers that guarantee input-to-state stability typically relies on model-based techniques. Consequently, such approaches involve a \emph{twofold} source of complexity: model identification followed by controller synthesis using conventional model-based methods. In contrast, \emph{direct} data-driven methods aim to bypass system identification altogether and instead provide formal stability analysis directly from observed data~\citep{martin2023guarantees}.

\subsection{Literature Overview on Data-Driven Control}

A growing body of literature has investigated data-driven frameworks for establishing stability certificates and synthesizing controllers for dynamical systems with unknown models. In this direction, pioneering works by~\citet{depersis2020tac} and~\citet{guo2021data} propose data-driven methodologies for stability analysis of persistently excited linear time-invariant systems based on Willems et al.'s fundamental lemma~\citep{willems2005note}, and nonlinear polynomial systems, respectively. Robust data-driven model predictive control for linear time-invariant systems is developed by~\citet{berberich2020data}, while \citet{taylor2021towards} present a data-driven framework for robust control synthesis of nonlinear systems under model uncertainty. Data-driven stability analysis for unknown systems has also been addressed in several settings, including switched systems~\citep{kenanian2019data}, continuous-time systems~\citep{boffi2021learning}, and discrete-time systems~\citep{lavaei2022data}. In the work by~\citet{zhou2022neural}, a framework is proposed for stabilizing unknown nonlinear systems by jointly learning a neural Lyapunov function and a nonlinear controller, with formal stability guarantees enforced via Satisfiability Modulo Theories (SMT) solvers. 

By employing overapproximation techniques to characterize the set of polynomial dynamics consistent with noisy data, \citet{chen2025data} construct an ISS Lyapunov function and a corresponding ISS controller for unknown nonlinear input-affine systems with polynomial dynamics. More recently, a data-driven feedback linearization approach has been explored by~\citet{depersis2025feedback}, where the change of coordinates and the linearizing controller are learned from a library of candidate functions using a single noise-free trajectory. In addition, \citet{zaker2025certified} investigate a data-driven approach for synthesizing \emph{incremental} ISS controllers for unknown nonlinear polynomial systems. Despite addressing stability analysis and controller synthesis using data, these approaches are all inherently tailored to \emph{monolithic}, low-dimensional systems. Consequently, they do not straightforwardly scale to large-scale interconnected networks, let alone infinite ones.

A limited number of studies have developed data-driven frameworks combined with compositional techniques for stability certification and controller design of \emph{finite-dimensional} interconnected networks. Assuming that subsystem dynamics are known, \citet{zhang2023compositional} adopt a compositional framework to construct neural certificates for networked dynamical systems, leveraging ISS Lyapunov functions and associated controllers of the underlying subsystems. Data-driven approaches for stability certification based on ISS and incremental ISS properties of subsystems using \emph{multiple trajectories} to gather two consecutive state-pair data are proposed by~\citet{lavaei2023ISS} and~\citet{zaker2025data}, respectively, while~\citet{zaker2025hscc} study the construction of ISS controllers and Lyapunov functions for unknown subsystems using a single input–state trajectory. Moreover, \citet{samari2025data} develop a data-driven approach to synthesize sliding mode controllers for nonlinear interconnected networks utilizing the ISS property of subsystems, while \citet{baggio2021data} construct optimal controllers for linear interconnected systems by considering the entire network as a single system. Although all the aforementioned works~\citep{lavaei2023ISS,zaker2025data,zaker2025hscc,samari2025data} extend subsystem-level results to the network via compositional techniques, they are limited to finite networks and do not extend to infinite networks. Moreover, these approaches assume noise-free data, whereas real-world data are invariably corrupted by noise.

Very recently, several studies have emerged providing data-driven \emph{safety} (rather than stability) guarantees for infinite networks. In this context,~\citet{aminzadeh2024compositional} propose a max-type small-gain compositional framework for verifying safety of infinite networks composed of discrete-time subsystems. Furthermore,~\citet{zaker2025infinite} develop a data-driven compositional condition that leverages Lipschitz bounds on subsystem dynamics together with storage certificates constructed for each subsystem. While promising, the approaches by~\citet{aminzadeh2024compositional} and~\citet{zaker2025infinite} rely on gridding the state space to collect data, which leads to \emph{exponential} sample complexity within each subsystem and often requires thousands of samples per subsystem to demonstrate their results. Moreover, both works assume \emph{noise-free} data and focus on \emph{verification} rather than controller synthesis, which is inherently a more challenging task.

\subsection{Our Contributions}

The primary objective of this work is to develop a data-driven framework for constructing a uniformly globally asymptotically stabilizing (UGA-stabilizing) controller together with an associated control Lyapunov function (CLF) for infinite networks of unknown continuous-time input-affine nonlinear subsystems. Specifically, {we first design an ISS Lyapunov function and a corresponding ISS controller for each individual subsystem directly from data, and then lift these results to the network level through a compositional strategy grounded in the small-gain theorem for infinite networks}. The main contributions of this paper, compared to the state of the art, are summarized as follows:
\begin{enumerate}
	\item[\textsl{(i)}] To the best of our knowledge, this work is the first to investigate data-driven UGA-stabilizing controller synthesis for unknown \emph{infinite} networks. More precisely, it proposes a {direct} data-driven approach based solely on a single input–state trajectory, \emph{i.e.}, {non-independent and identically distributed (non-i.i.d.)} data. We re-emphasize that the approach by~\citet{zaker2025hscc} is restricted to finite networks and, as such, cannot address the problem considered in this work.
	
	\item[\textsl{(ii)}] We assume that exact state-derivative measurements are unavailable and are instead corrupted by measurement noise with known bounds~(cf. Assumption~\ref{asmp:noise assumption}). Under this setting, unlike the one considered by~\citet{zaker2025hscc}, which assumes noise-free data, we construct ISS Lyapunov functions and corresponding ISS controllers for the subsystems directly from the noisy data; see Theorems~\ref{thm:main} and~\ref{thm:main 2}.
	
	\item[\textsl{(iii)}] We consider input-affine nonlinear polynomial dynamics that allow for {state-dependent} control input matrices (see Definition~\ref{def:ct-NPS}), unlike the setting considered by~\citet{zaker2025hscc}, which restricts attention to constant ones.
	
	\item[\textsl{(iv)}] We assume that the adversarial matrices representing the interconnection weights (matrices $D_i$ in~\eqref{eq:subsystem}) are unknown, and that only an upper bound on their induced $2$-norm is available (cf. Assumption~\ref{asmp:D}), unlike~\citet{zaker2025hscc} which assume these matrices are fully known. Under this assumption, we establish the main result stated in Theorem~\ref{thm:main}.
	
	\item[\textsl{(v)}] Finally, unlike the works by~\citet{aminzadeh2024compositional} and~\citet{zaker2025infinite}, which rely on \emph{exponential} state-space gridding to collect data and assume \emph{noise-free} measurements while focusing on the \emph{verification} problem, our paper develops a data-driven \emph{controller synthesis} framework using only a \emph{single noise-corrupted} input–state trajectory from each subsystem.
\end{enumerate}

To illustrate the capability of the proposed data-driven framework, we examine three distinct infinite networks with both sparse and dense interconnection structures, thereby demonstrating the effectiveness of our approach. For comparison, the results are summarized in a table that highlights the performance of our methodology in terms of connectivity degree, memory usage, and runtime.

\subsection{Organization}
The remainder of the paper is organized as follows. Section~\ref{sec:Problem Description} presents formal definitions of the subsystem and infinite network dynamics, and the notions of UGAS and ISS. Section~\ref{sec:ISS Lyapunov} describes the data collection procedure and develops the main theoretical data-driven results, which establish the ISS property for subsystems with unknown polynomial dynamics. Section~\ref{sec:compose} derives small-gain-based compositional results for constructing a network-level CLF and synthesizing a controller that renders the infinite network UGAS. Section~\ref{sec:simulation} reports simulation studies, demonstrating the effectiveness of the proposed approach, while Section~\ref{sec:conclusion} concludes the paper.

\subsection{Notation}
We use $\R$, $ \R_0^+$, and $\R^+$ to denote the sets of all real numbers, non-negative real numbers, and positive real numbers, respectively. Moreover, the set of positive integers is denoted by $\N^+ := \{1, 2, \ldots\}$. The cardinality of a set $\mathbf M$ is denoted by {$\mathtt{Card}(\mathbf M)$}.
A vector $x$ composed of a {countably infinite number} of vectors $x_i\in\R^{n_i}$ is represented by $x=(x_i)_{i\in \N^+}$. A block matrix $A$ with infinitely many block entries $A_{ij}$ is denoted by $A=(A_{ij})_{i,j\in\N^+},$ while a block diagonal matrix $B$ with a countably infinite number of matrix entries is represented by $B = \Diag{i}{B_i}$. We denote \emph{vector} norms on both \emph{finite- and infinite-dimensional} spaces by $|\cdot|$, and their corresponding \emph{induced} operator norms by $\|\cdot\|$. 
Whenever the argument of a norm is indexed by $i$, the norm refers to the \emph{finite}-dimensional norm associated with the $i$-th subsystem; otherwise, the norm is understood to be taken over the {infinite}-dimensional network space. For a finite-dimensional vector $x_i=[x_{1_i}~x_{2_i}~\ldots~x_{n_i}]^\top\in\R^{n_i}$, the $p$-norm is denoted by $|x_i|_p$ and defined as $|x_i|_p=(\sum_{j = 1}^{n_i}|x_{j_i}|^p)^{\sfrac{1}{p}}$. When no subscript is specified, $|\cdot|$ denotes the absolute value.
We use the notation $\ell^p$ to refer to the Banach space that includes all real vectors $x=(x_i)_{i\in \N^+}$, with finite $\ell^p$-norm $\vert x\vert_p < \infty$, where
$\vert x\vert_p = (\sum_{i = 1}^{\infty} \vert x_i\vert_p ^p)^{\sfrac{1}{p}},$ of which $\ell^2$ is specifically used to denote the Hilbert space {endowed with the canonical inner product. The space $\ell^\infty$ is a Banach space consisting of sequences equipped with the norm $\lvert x\rvert_\infty := \sup_{i\in\N^+}\lvert x_i\rvert$.} 
Given sets $X_i, i\in \N^+$, their Cartesian product is denoted by $\prod_{i\in \N^+}\!X_i$. For any {symmetric} matrix $P$, its smallest and largest eigenvalues are denoted by $\lambda_{\min}(P)$ and $\lambda_{\max}(P)$, respectively. {A matrix $P \in \R^{n \times n}$ is said to be positive definite (respectively, positive semi-definite), denoted by $P \succ 0$ (respectively, $P \succeq 0$), if it is {symmetric} and all its eigenvalues are strictly positive (respectively, non-negative). For $P \succ 0$, we denote by $\sqrt{P}$ the (unique) positive-definite matrix $A$ such that $A^2=P$.} The transpose of a matrix $P$ is represented by $P^\top$. We denote a transposed element in a symmetric position of a symmetric matrix by $\star$. The identity matrix of size $n$ is denoted by $\mathds{I}_n$. The zero matrix of dimension $n \times m$ is denoted by $\bbzero_{n \times m}$, while $\bbzero_n$ denotes the zero vector of dimension $n$. The horizontal stacking of vectors $x_i \in \R^n$ into an $n \times N$ matrix is compactly written as $[x_1 ~ x_2 ~ \dots ~ x_N]$. Given a matrix $P$, we denote by $\operatorname{rank}(P)$ and $\mathscr{R}(P)$ its rank and range, respectively.

We define function classes based on their structural properties. {A continuous function $\beta\!: \R_0^+ \rightarrow \R_0^+$ belongs to class $\mathcal{K}$ if it is strictly increasing and satisfies $\beta(0) = 0$.} If such a function also satisfies $\beta(r) \rightarrow \infty$ as $r \rightarrow \infty$, it is said to be of class $\mathcal{K}_\infty$. A function $\beta\!: \R_0^+ \times \R_0^+ \rightarrow \R_0^+$ belongs to class $\mathcal{KL}$ if, for each fixed $s$, the mapping $r \mapsto \beta(r, s)$ is a $\mathcal{K}$-function, and for each fixed $r > 0$, the function $s \mapsto \beta(r, s)$ is strictly decreasing and converges to zero as $s \to \infty$. 

\section{Problem Description}
\label{sec:Problem Description}

\subsection{Subsystem Description}

This work is concerned with infinite networks comprising continuous-time input-affine nonlinear polynomial subsystems. We begin by formally defining such subsystems.
\begin{definition}\label{def:ct-NPS}
	Let $i\in\N^+$ be given.
	A {continuous-time input-affine nonlinear polynomial subsystem (\NPS)} is described by 
	\begin{align}\label{eq:subsystem}
		\Sigma_i\!: \dot x_i=f_i(x_i) + g_i(x_i)u_i + D_iw_i,      
	\end{align}
	where $f_i: \R^{n_i}\to\R^{n_i}$ is a vector polynomial function in states $x_i\in \R^{n_i}$, satisfying $f_i(\bbzero_{n_i})=\bbzero_{n_i}$, $u_i\in \R^{m_i}$ is the control input
	to be designed as a \emph{continuous} state-feedback law, and $g_i: \R^{n_i}\to \R^{n_i\times m_i}$ is the control matrix polynomial function. This implies that $f_i$ and $g_i$ are locally Lipschitz continuous {and thus Lipschitz continuous on bounded balls}.
	Here,
	\begin{subequations}\label{eq:interconnection}
		\begin{align}
			&w_i =(w_{ij})_{j\in\mathbf M_i}\in\R^{\sigma_i}=\prod_{j\in \mathbf M_i}\R^{n_j},\label{eq:w partition}\\
			&D_i = (D_{ij})_{j\in\mathbf M_i}\in\R^{n_i\times \sigma_i},\label{eq:D partition}
		\end{align}
	\end{subequations}
	are the adversarial input and matrix, respectively, with $\mathbf M_i \subset \N^+$ being the \emph{finite} set of subsystems affecting $\Sigma_i$
	and $\sigma_i = \underset{j\in\mathbf M_i}{\sum}n_j$. We assume that the dimension of $w_{ij}$ matches that of $x_j,$ for any $i\in\N^+, j\in\mathbf M_i$. Note that $i\notin \mathbf M_i,$ for any $i\in\N^+,$ and $w_{ij}\in\R^{n_j}$. For the sake of a simpler representation, we define 
	\[
	\mathscr F_i(x_i, w_i, u_i)\Let f_i(x_i) + g_i(x_i)u_i + D_iw_i.
	\]
\end{definition}
To ensure the existence of a solution, and therefore, the well-posedness of the subsystems, we make the following assumption.
\begin{assumption}
	\label{asmp:uniform lip}
	We assume that $f_i$ and $g_i$ are locally Lipschitz continuous \emph{uniformly in $i$}, that is, for every $\varphi>0$ there exists $L>0$ such that for all $x_i,y_i \in\R^{n_i}$ with $|x_i|_2\leq \varphi$, $|y_i|_2\leq \varphi$, one has   
	\begin{subequations}
		\begin{align}   
			|f_i(x_i)-f_i(y_i)|_2&\leq L|x_i-y_i|_2,\\
			{\|g_i(x_i)-g_i(y_i)\|_2}&\leq L|x_i-y_i|_2.
		\end{align}
	\end{subequations}
\end{assumption}
This assumption is mainly required to show the well-posedness of the infinite network (cf. Lemma~\ref{lem:lip}).

Without affecting generality, one can reformulate the dynamics in~\eqref{eq:subsystem} as
\begin{align}\label{eq:subsystem reform}
	\Sigma_i\!: \dot x_i = A_i^*\mathcal F_i^*(x_i) + B_i^*\mathcal G_i^*(x_i)u_i + D_iw_i,
\end{align}
where $A_i^*\in\R^{n_i\times N_i^*}$ is the constant system matrix and $\mathcal F_i^*:\R^{n_i}\to\R^{N_i^*}$ is a vector function consisting of monomials in states,
while $B_i^*\in \R^{n_i\times M_i^*}$ is the constant control matrix and $\mathcal G_i^*:\R^{n_i}\to\R^{M_i^*\times m_i}$ is a matrix function consisting of monomials in states. Throughout this work, all matrices $A_i^*, B_i^*,$ and $D_i$, along with the exact monomials in $\mathcal F_i^*(x_i)$ and $\mathcal G_i^*(x_i)$ are assumed to be unknown. However, $\mathcal F_i^*(x_i)$ and $\mathcal G_i^*(x_i)$ satisfying 
\[
\mathcal F_i^*(x_i) = \Upsilon_{\mathcal F_i}\mathcal F_i(x_i), \qquad \mathcal G_i^*(x_i) = \Upsilon_{\mathcal G_i}\mathcal G_i(x_i),
\]
respectively, where $\mathcal F_i:\R^{n_i}\to\R^{N_i}$
and $\mathcal G_i:\R^{n_i}\to\R^{M_i\times m_i}$ are, respectively, \emph{known} vector and matrix-valued functions with monomial components, called \emph{dictionaries of monomials}, 
for some \emph{unknown} matrices $\Upsilon_{\mathcal F_i}\in\R^{N_i^*\times N_i}$ and $\Upsilon_{\mathcal G_i}\in\R^{M_i^*\times M_i},$ leading to
\begin{align}
	\label{eq:subsystem reform 1}
	\Sigma_i\!: \dot x_i = A_i\VecF(x_i) + B_i\mathcal G_i(x_i)u_i + D_iw_i,
\end{align}
with $A_i\Let A_i^*\Upsilon_{\mathcal F_i}\in\R^{n_i\times N_i}$ and $B_i\Let B_i^*\Upsilon_{\mathcal G_i}\in\R^{n_i\times M_i}$ being the unknown system and control matrices, respectively. We assume $N_i^*\!\leq\! N_i$ and $M_i^*\!\leq\! M_i$, which, together with the existence of $\Upsilon_{\mathcal F_i}$ and $\Upsilon_{\mathcal G_i}$, ensure that the dictionaries $\mathcal F_i(x_i)$ and $\mathcal G_i(x_i)$ contain all terms in $\mathcal F_i^*(x_i)$ and $\mathcal G_i^*(x_i)$, possibly augmented with additional terms to guarantee their completeness. Furthermore, the available dictionary 
$\mathcal F_i(x_i)$ satisfies $\VecF(\bbzero_{n_i})=\bbzero_{N_i}$ and
\begin{align}\label{eq:trans x}
	\VecF(x_i) = \Psi_i(x_i)x_i,
\end{align}
with $\Psi_i(x_i)\in\R^{N_i\times n_i}$ being referred to as the transformation matrix, which contains monomials, possibly with degrees from zero up to (but excluding) the maximum monomial degree.
Consequently, the \NPS\ in~\eqref{eq:subsystem reform} can be expressed as
\begin{align}
	\label{eq:subsystem reform 2}
	\Sigma_i: \dot x_i = A_i\Psi_i(x_i)x_i + B_i\mathcal G_i(x_i)u_i + D_iw_i.
\end{align}
We employ the tuple {$\Sigma_i\!=\!(A_i,B_i, D_i, \mathcal F_i, \mathcal G_i)$} to denote the subsystem in~\eqref{eq:subsystem reform 2}.

\begin{remark}
	We note that since $\mathcal F_i (\bbzero_{n_i}) = \bbzero_{N_i}$, condition~\eqref{eq:trans x} can be satisfied without loss of generality by always finding a matrix $\Psi_i(x_i)$ that satisfies it. This condition, in turn, helps ensure that all derivations are eventually represented in terms of $x_i$ rather than $\mathcal F_i (x_i)$, aligning with the choice of ISS Lyapunov functions and facilitating the proofs of our main results (cf. Theorems~\ref{thm:main} and~\ref{thm:main 2}). We also note that assuming the availability of sufficiently rich dictionaries $\mathcal F_i(x_i)$ and $\mathcal G_i(x_i)$ is common in the literature and not overly restrictive. In many engineering systems, partial insight into the underlying dynamics can typically be obtained from first principles, for instance, upper bounds on the maximum degrees of $\mathcal F_i^\ast(x_i)$ and $\mathcal G_i^\ast(x_i)$ {are information that can be obtained from such insights,} which allows $\mathcal F_i(x_i)$ and $\mathcal G_i(x_i)$ to be constructed by including all state monomials up to these known bounds.
\end{remark}

\subsection{Infinite Networks}

Having introduced individual subsystems $\Sigma_i\!=\!(A_i,B_i,$ $D_i,\mathcal F_i,\mathcal G_i)$,
we now define an infinite network induced by $\Sigma_i$.

\begin{definition}\label{def:inf net}
	Consider subsystems {$\Sigma_i\!=\!(A_i,B_i, D_i,$ $\mathcal F_i,\mathcal G_i)$}, $i\in\N^+$, as defined in Definition~\ref{def:ct-NPS}, with interconnection configuration in~\eqref{eq:interconnection} and the interconnection constraint
	\begin{align}\label{eq:int constraint}
		w_{i j}=x_j,\quad \forall i\in\N^+, \quad \forall j\in \mathbf M_i,
	\end{align}
	{\ie, the states $x_j$, $j\in \mathbf M_i$, are treated as the adversarial inputs into the subsystem $\Sigma_i$.} 
	The infinite network $\Siginf$ is defined by
	\begin{align}\label{eq:inf network}
		{\Sigma:\quad \dot x={f(x)} + g(x) u.}
	\end{align}
	{Here, $f(x)\Let A(x)x$, ${A(x)\Let (A_{ij}(x_i))_{i,j\in\N^+}}$, which satisfies $f(\pmb{\bbzero})=\pmb{\bbzero}$, with $\pmb{\bbzero}$ being an infinite vector of zero elements and}
	\begin{align*}
		x &= (x_i)_{i\in\N^+}\in X ,\ u = (u_i)_{i\in\N^+}\in U,\\
		{A_{ij}(x_i)}& \coloneq \begin{cases}
			A_i\Psi_i(x_i), & i\in\N^+,\;j=i,\\
			D_{ij}, & i\in\N^+,\; j\in\mathbf M_i,\\
			\bbzero_{n_i\times n_j}, &  \text{otherwise,}
		\end{cases}\\
		{g(x)} & \coloneq \Diag{i}{B_i\mathcal G_i(x_i)},
	\end{align*}
	where {the state and input spaces for the network are defined as the $\ell^2$ Hilbert spaces of sequences}
	\begin{align*}
		X &= \Big\{x\,\big |\, x_i \!\in\! \R^{n_i}, \vert x\vert_2 \!=\! \Big(\sum_{i = 1}^{\infty} \vert x_i\vert_2 ^2\Big)^{\sfrac{1}{2}}\!<\infty \Big\}\subset\prod_{i\in\N^+}\R^{n_i},\\
		U &= \Big\{u \,\big |\, u_i \!\in\! \R^{m_i}, \vert u\vert_2 \!=\! \Big(\sum_{i = 1}^{\infty} \vert u_i\vert_2 ^2\Big)^{\sfrac{1}{2}}\!<\infty \Big\}\subset\prod_{i\in\N^+}\R^{m_i}.\\
	\end{align*}
	We denote the infinite network in~\eqref{eq:inf network} by the tuple $\Siginfset$.
	We define 
	\[
	\mathscr F(x, u)\Let f(x) + g(x) u
	\]
	to simply refer to the network dynamics,
	and it is clear that $\mathscr F(x,u) = (\mathscr F_i(x_i, w_i, u_i))_{i\in\N^+}$.
\end{definition}

{The main objective of this work is to design a (nonlinear) feedback controller for the infinite network $\Siginfset$ such that the closed-loop system is uniformly globally asymptotically stable.} {We define the space of admissible feedback as}
{\begin{align*}
		&\mathcal U \!= \!\Big\{\!u\!: X \to U \,\big| \\
		&\hphantom{\mathcal U \!= \!\Big\{\!u\!:  X}u \text{ is Lipschitz continuous on bounded balls}\!\Big\}.
\end{align*}}

By picking an admissible feedback $u\Let K(x)\!\in\! \mathcal U$, the corresponding closed-loop system takes the form
\begin{align}\label{eq:inf network-closed-loop}
	{\Sigma_K:\quad \dot x={f(x)} + g(x) K(x).}
\end{align}
The closed-loop infinite network of interconnected subsystems is denoted by $\SiginfK$ and its tuple is represented by $\SiginfsetK$. Let $\xi:I\!\to\! X$ be a continuous mapping defined on an interval $I=[0,\mathcal T_*)$ with $\mathcal T_*\!\in\!(0,\infty]$. We say that $\xi$ is a \emph{solution} of the {closed-loop} infinite network~\eqref{eq:inf network-closed-loop} corresponding to the initial state $x_0\!\in\! X$ if
\begin{align}\label{eq:Riemann}
	{\xi(t)=x_0+\int_0^t \mathscr F(\xi(s),K(\xi(s)))\,ds, \quad t\in I}
\end{align}
is satisfied. Since the integrand is continuous {(cf. Lemma~\ref{lem:lip})}, the integral in~\eqref{eq:Riemann} is well-defined as {an $X$-valued Riemann integral}.

The {closed-loop} network $\SiginfsetK$
is said to be \emph{well-posed} if, for every $x_0\in X$, there exists a unique maximal solution $\xi$ satisfying~\eqref{eq:Riemann}, which we denote by $\phi_K(t,x_0)$  and its maximal interval of existence is represented by $I_{\max}(x_0)$. If $I_{\max}(x_0)=\R^+$ holds for all $x_0\in X$, the network is called \emph{forward complete}.

{The following lemma establishes that the dynamics of the infinite network are well-posed.
	\begin{lemma}\label{lem:lip}
		Suppose that Assumption~\ref{asmp:uniform lip} holds. Then the {closed-loop} network is well-posed.
	\end{lemma}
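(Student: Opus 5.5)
The plan is to reduce the statement to the standard existence-and-uniqueness theorem for ODEs in Banach spaces. If the closed-loop right-hand side $x\mapsto \mathscr F(x,K(x))=f(x)+g(x)K(x)$ maps $X=\ell^2$ into itself and is Lipschitz continuous on bounded balls, then the Picard iteration applied to~\eqref{eq:Riemann} gives a unique local solution through every $x_0\in X$. The usual continuation argument then yields a unique maximal solution $\phi_K(\cdot,x_0)$ on $I_{\max}(x_0)$. This is exactly well-posedness in the sense defined above. The work therefore consists in verifying these two properties of $\mathscr F(\cdot,K(\cdot))$, and it uses Assumption~\ref{asmp:uniform lip} in an essential way.

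\textbf{Step 1: the diagonal drift.} Fix $\varphi>0$ and take $x,y\in X$ with $|x|_2,|y|_2\le\varphi$. Then every block satisfies $|x_i|_2,|y_i|_2\le\varphi$, so Assumption~\ref{asmp:uniform lip} gives a single constant $L=L(\varphi)$, independent of $i$, with $|f_i(x_i)-f_i(y_i)|_2\le L|x_i-y_i|_2$. Since $f_i(\bbzero_{n_i})=\bbzero_{n_i}$, taking $y=\pmb{\bbzero}$ gives $\sum_i|f_i(x_i)|_2^2\le L^2|x|_2^2$. Hence the diagonal part $(A_i\Psi_i(x_i)x_i)_{i}$ lies in $\ell^2$, and summing the squared blockwise estimates shows it is $L$-Lipschitz on the ball.

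\textbf{Step 2: the control term.} For $g(x)K(x)$, Assumption~\ref{asmp:uniform lip} gives $\|g_i(x_i)\|_2\le \|g_i(\bbzero_{n_i})\|_2+L\varphi$. Under the natural uniform bound $\sup_i\|g_i(\bbzero_{n_i})\|_2<\infty$, which I will make explicit if it is not already implied by the setting, the operator $g(x)=\Diag{i}{B_i\mathcal G_i(x_i)}$ is bounded from $U$ to $X$. Its operator norm is bounded by $c(\varphi)$ uniformly on the ball, and $x\mapsto g(x)$ is $L$-Lipschitz in operator norm there. Because $K\in\mathcal U$ is Lipschitz on bounded balls, it is bounded on bounded balls. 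The product estimate
\[
|g(x)K(x)-g(y)K(y)|_2\le \|g(x)-g(y)\|\,|K(x)|_2+\|g(y)\|\,|K(x)-K(y)|_2
\]
then gives the Lipschitz bound for the control term.

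\textbf{Step 3: the interconnection term (main obstacle).} The remaining and most delicate part is the off-diagonal term $(D_iw_i)_{i}=(\sum_{j\in\mathbf M_i}D_{ij}x_j)_{i}$. This is a linear operator on $\ell^2$, and it must be shown to be bounded. The bound does not follow from finiteness of each $\mathbf M_i$ alone. I would use a uniform bound $\sup_i\|D_i\|_2=:\bar d<\infty$, which is the kind of norm bound provided by Assumption~\ref{asmp:D}, together with a uniform bound $\bar M$ on the in-degrees $\mathtt{Card}(\mathbf M_i)$ and on the out-degrees $\mathtt{Card}\{i:j\in\mathbf M_i\}$. Cauchy--Schwarz gives $|D_iw_i|_2^2\le\bar d^2\sum_{j\in\mathbf M_i}|x_j|_2^2$. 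Summing over $i$ and exchanging the order of summation yields $\sum_i|D_iw_i|_2^2\le \bar d^2\bar M|x|_2^2$, so this term is a bounded, hence globally Lipschitz, linear map. I would state this boundedness as the standing requirement on the interconnection structure, as is done in the infinite-network small-gain literature.

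Combining Steps 1--3, $\mathscr F(\cdot,K(\cdot))$ is Lipschitz on bounded balls from $X$ to $X$, and in particular continuous. This continuity justifies the $X$-valued Riemann integral in~\eqref{eq:Riemann}, and the Banach-space Picard--Lindelöf theorem, together with maximal continuation, completes the proof.
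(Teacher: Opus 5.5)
Your route is the same as the paper's: show that $x\mapsto\mathscr F(x,K(x))$ is Lipschitz on bounded balls of $X$, then invoke the Banach-space Picard--Lindel\"of theorem with maximal continuation. The difference lies in how the Lipschitz property is obtained. The paper's proof asserts it as a direct consequence of Assumption~\ref{asmp:uniform lip}, and only in the form ``Lipschitz in $x$ for fixed $u$ with $|u|_2\le\varphi$'', leaving the composition with $K$ implicit. You verify it block by block and handle the composition through the product estimate in Step~2. In doing so you find that three additional uniform bounds are needed:
\begin{itemize}
\item $\sup_i\|g_i(\bbzero_{n_i})\|_2<\infty$;
\item $\sup_i\|D_i\|_2<\infty$, since Assumption~\ref{asmp:D} only provides an $i$-dependent $\varkappa_i$;
\item a bound on the out-degrees $\mathtt{Card}\{i: j\in\mathbf M_i\}$.
\end{itemize}
You are right that none of these follows from the lemma's hypothesis. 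Adding them is therefore a necessary repair of the statement rather than a weakness of your argument. Two examples show this.
\begin{itemize}
\item Take $n_i=m_i=1$, $f_i\equiv 0$, $g_i\equiv 0$, $\mathbf M_1=\emptyset$, and $\mathbf M_i=\{1\}$, $D_{i1}=1$ for $i\ge 2$. Assumption~\ref{asmp:uniform lip} holds trivially. Yet from $x_0=(1,0,0,\dots)$ each coordinate of a solution of \eqref{eq:Riemann} is forced to be $x_1\equiv 1$ and $x_i(t)=t$ for $i\ge 2$, which is not in $\ell^2$ for any $t>0$.
\item Take $f_i\equiv 0$, $g_i\equiv i$, $\mathbf M_i=\emptyset$ and the admissible feedback $K(x)=x$. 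This gives $\dot x_i=i\,x_i$, whose right-hand side at $x_0=(1/i)_{i\in\N^+}$ is $(1,1,1,\dots)\notin\ell^2$.
\end{itemize}

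One small refinement concerns Step~3. Since you bound each full block row by $\|D_i\|_2\le\bar d$, the summation step $\sum_i\sum_{j\in\mathbf M_i}|x_j|_2^2=\sum_j\mathtt{Card}\{i: j\in\mathbf M_i\}\,|x_j|_2^2$ uses only the out-degree bound. A bound on the in-degrees $\mathtt{Card}(\mathbf M_i)$ would be needed only if you started from bounds on the individual blocks $\|D_{ij}\|_2$. Also, the estimate $|D_iw_i|_2\le\|D_i\|_2\,|w_i|_2$ is just the definition of the induced norm, not Cauchy--Schwarz.
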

	
\textbf{Proof.} As Assumption~\ref{asmp:uniform lip} holds, 
		the map $\mathscr F(x,u)$ is Lipschitz continuous with respect to $x$ on bounded balls, that is, for any $\varphi>0$ there exists  $L>0$ such that whenever $|x_1|_2\leq \varphi$, $|x_2|_2\leq \varphi$, $|u|_2\leq \varphi$, one has 
		$ \vert \mathscr F(x_1,u) - \mathscr F(x_2,u)\vert_2 \leq L \vert x_1 - x_2\vert_2.$ 
		
		\noindent This guarantees that for every initial condition and admissible input, the closed-loop system $\Sigma_K$ possesses a unique solution and is therefore well-posed.
}

Having established existence and uniqueness of solutions for the closed-loop infinite network $\SiginfsetK$, we present sufficient conditions for uniform global asymptotic stability  of the infinite network, leveraging the subsystem structure and the interconnection topology, as detailed in the next subsection.

\subsection{UGAS Property}

We present the formal notion of uniform global asymptotic stability for the infinite network $\SiginfK$ in the following definition, adapted from~\citet{Mironchenko2023ISS}.

\begin{definition}\label{def:UGAS}
	The closed-loop infinite network $\SiginfK$ is called \emph{uniformly globally asymptotically stable (UGAS)}\footnote{Throughout the paper, the abbreviation UGAS is used interchangeably for uniform global asymptotic stability and uniformly globally asymptotically stable, with the intended meaning being clear from the context.} if there exists $\beta\in\mathcal{KL}$ such that
	\begin{align*}
		\vert \phi_K(t,x_0) \vert_2 \leq \beta(\vert x_0\vert_2 ,t),\quad t\geq 0,\quad x_0\in X.
	\end{align*}
\end{definition}

{\begin{remark}
		For ODEs with Lipschitz continuous right-hand sides, UGAS is equivalent to the classical global asymptotic stability (global attractivity together with local stability), see, \eg, \citep[Theorem B.37]{Mironchenko2023ISS}. 
		For infinite networks, however, UGAS is significantly stronger than global asymptotic stability, see, \eg, \citep[Examples 6.17, 6.22]{Mironchenko2023ISS}.
	\end{remark}
	
	As our aim is stabilization, we additionally define the uniform global asymptotic stabilizability for the infinite network $\Siginf$ in the following definition.
	
	\begin{definition}
		\label{def:UGA-stabilizable}
		The infinite network $\Siginf$ is called \emph{uniformly globally asymptotically stabilizable (UGA-stabilizable)} if there exists $u=K(\cdot)\in\mathcal U$ such that $\Sigma_K$ is UGAS.
	\end{definition}
}
For a continuous function $\mathcal V: X \to \R^+_0$, we define the  \emph{Lie derivative} $\dot{\mathcal V}(x_0)$ (along $\phi_K$) as the \emph{upper right-hand Dini derivative} at zero for the function $t \mapsto \mathcal V(\phi_K(t,x_0))$:
\begin{align}\label{eq:dini V}
	\dot{\mathcal V}(x_0)    &\coloneq \limsup_{s \to 0} \tfrac{1}{s}\big(\mathcal V(\phi_K(s,x_0)) - \mathcal V(x_0)\big).
\end{align}
We now present the following theorem, which elucidates the sufficient conditions for $\Siginf$ to achieve UGAS, as specified in Definition~\ref{def:UGAS}~\citep{kawan2020lyapunov}.

\begin{theorem}\label{thm:CLF}
	Given an infinite network $\Sigma\!=\!(f,g,X,U)$, assume there exist a function $\mathcal V\!:X\to\R_0^+$, known as a control Lyapunov function (CLF), a feedback $u=K(\cdot)\in\mathcal U$, 
	and constants $ \underline{\alpha},\overline{\alpha},\kappa >0,$ satisfying
	\begin{equation}
		\label{eq:con1-network}
		\underline{\alpha}\vert x \vert_2^2\le \V\le \overline{\alpha}\vert x \vert_2^2, \quad \quad {\forall} x\!\in\! X,
	\end{equation}
	and
	\begin{align}
		\label{eq:con2-network}
		\Vp \leq -\kappa \V,\quad \quad {\forall} x\!\in\! X.
	\end{align}
	Then, the infinite network $\SiginfK$ is UGAS in the sense of Definition~\ref{def:UGAS}.
\end{theorem}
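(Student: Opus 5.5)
The proof is a standard Lyapunov comparison argument carried out along the solution $\phi_K(t,x_0)$ of the closed-loop network $\Sigma_K$. By Lemma~\ref{lem:lip}, that solution exists uniquely on its maximal interval $I_{\max}(x_0)=[0,\mathcal T_*)$.

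\emph{Step 1: exponential decay of $\mathcal V$.} Fix $x_0\in X$ and set $y(t)\Let \mathcal V(\phi_K(t,x_0))$ for $t\in I_{\max}(x_0)$. By the semigroup (cocycle) property of the unique maximal solution, $\phi_K(s,\phi_K(t,x_0))=\phi_K(t+s,x_0)$. Hence the upper right Dini derivative $D^+y(t)$ coincides with $\dot{\mathcal V}(\phi_K(t,x_0))$ as defined in~\eqref{eq:dini V}. Condition~\eqref{eq:con2-network} then gives
\[
D^+y(t)\le -\kappa\, y(t), \qquad t\in I_{\max}(x_0).
\]
We want to conclude $y(t)\le e^{-\kappa t}y(0)$. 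Consider $z(t)\Let e^{\kappa t}y(t)$, which satisfies $D^+z(t)\le 0$. A continuous function with nonpositive upper right Dini derivative everywhere is nonincreasing; this is a classical comparison lemma. Continuity of $y$ is the one point that needs care, because $\mathcal V$ is only assumed to map into $\R_0^+$. I would handle it in one of two ways. The first is to note that the CLF is taken to be continuous, as in the definition preceding~\eqref{eq:dini V}, so $y$ is continuous as a composition of continuous maps. The second is to use the version of the comparison principle from~\citep{Mironchenko2023ISS}, which needs only the Dini condition along trajectories. This monotonicity step is the main technical point I expect to need attention.

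\emph{Step 2: norm estimate.} Combining Step 1 with the sandwich bound~\eqref{eq:con1-network} gives
\[
\underline{\alpha}\,|\phi_K(t,x_0)|_2^2\le y(t)\le e^{-\kappa t}\,\overline{\alpha}\,|x_0|_2^2,
\]
so that
\[
|\phi_K(t,x_0)|_2\le \sqrt{\overline{\alpha}/\underline{\alpha}}\;e^{-\kappa t/2}\,|x_0|_2 .
\]

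\emph{Step 3: forward completeness.} The bound in Step 2 shows the solution stays in a bounded ball for all $t\in I_{\max}(x_0)$. Suppose $\mathcal T_*<\infty$. On that ball $\mathscr F(\cdot,K(\cdot))$ is bounded, by the Lipschitz-on-bounded-balls property from Lemma~\ref{lem:lip} together with $K\in\mathcal U$. So $\xi$ is Lipschitz in $t$, and by~\eqref{eq:Riemann} it has a limit as $t\to\mathcal T_*$. We can then extend the solution past $\mathcal T_*$, contradicting maximality. Therefore $I_{\max}(x_0)=\R_0^+$.

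\emph{Step 4: conclusion.} Set $\beta(r,t)\Let \sqrt{\overline{\alpha}/\underline{\alpha}}\,e^{-\kappa t/2}\,r$. For fixed $t$ this is of class $\mathcal K$ in $r$, and for fixed $r>0$ it is strictly decreasing in $t$ and tends to zero, so $\beta\in\mathcal{KL}$. Together with Steps 2 and 3 this is exactly the UGAS bound of Definition~\ref{def:UGAS}, which completes the proof.
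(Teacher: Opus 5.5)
Your argument is correct. The paper gives no proof of this theorem and simply imports it from \citet{kawan2020lyapunov}; what you wrote (comparison principle for the Dini derivative of $t\mapsto\mathcal V(\phi_K(t,x_0))$, the sandwich bound \eqref{eq:con1-network}, and boundedness-implies-continuation for forward completeness) is the standard argument behind that result. The one point to tighten is that only your first justification of continuity works: the Dini condition alone cannot replace continuity of $t\mapsto\mathcal V(\phi_K(t,x_0))$, since a unit upward step $z$ has $D^+z\le 0$ everywhere but is not nonincreasing. The paper does supply this continuity, because the Lie derivative \eqref{eq:dini V} is only defined for continuous $\mathcal V$.
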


Establishing a CLF guaranteeing an infinite network is UGAS typically entails substantial computational effort, even when the system model is available. A practical alternative is to examine the network’s stability by exploiting the \emph{input-to-state} stability property of its subsystems, as formalized in the following definition~\citep{Son08}, while subsequently applying a small-gain compositional reasoning.

\begin{definition}
	Given a \NPS\ $\Sigma_i\!=\!(A_i,B_i, D_i,$ $\mathcal F_i, \mathcal G_i)$, a smooth function $\mathcal V_i\!:\!\R^{n_i}\to\R_0^+$ is
	called an \emph{ISS Lyapunov function} if there exist constants $\underline{\alpha}_i, \overline{\alpha}_i,\kappa_i \in \R^+,$ and $\rho_i \in \R^+_0,$ such that
	\begin{itemize}
		\begin{subequations}
			\item  $\forall x_i\!\in\R^{n_i}\!\!:$
			\begin{align}\label{eq:ISS-con1}
				\underline{\alpha}_i |x_i|_2^2\le \!\Vs \le \overline{\alpha}_i |x_i|_2^2,
			\end{align}
			\item $\forall x_i\in\R^{n_i},\: \exists u_i\in \R^{m_i},$ such that $\forall w_i\in \R^{\sigma_i}\!\!:$
			\begin{align}\label{eq:ISS-con2}
				&\dot{\mathcal V}_i(x_i) ~\leq - \kappa_i \Vs +  \rho_i |w_i|_2^2,
			\end{align}
		\end{subequations}
	\end{itemize}
	where $\dot{\mathcal V}_i$ is the Lie derivative of $\mathcal V_i:\R^{n_i}\to\R_0^+$ with respect to the dynamics in~\eqref{eq:subsystem reform 1} (or equivalently \eqref{eq:subsystem reform 2}), and since $\mathcal V_i$ is smooth, it can be computed as 
	\begin{align}\label{Lie derivative}
		\dot{\mathcal V}_i(x_i)=\partial_{x_i}\! \mathcal V_i(x_i)(A_i\mathcal F_i(x_i) + B_i\mathcal G_i(x_i)u_i + D_iw_i),
	\end{align}
	with $\partial_{x_i} \!\Vs = \frac{\partial \Vs}{\partial x_i}$.
\end{definition}

Although a CLF for the infinite network $\Siginf$ can, in principle, be synthesized by combining the ISS Lyapunov functions of the individual subsystems through a compositional condition, the construction of such ISS Lyapunov functions is infeasible due to the unknown matrices $A_i$, $B_i$, and $D_i$ in \eqref{Lie derivative}. Motivated by this key challenge, we next state the primary problem of interest addressed in this work.

\begin{resp}
	\begin{problem}\label{prob}
		Consider an infinite network $\Siginf$ consisting of countably infinitely many subsystems {$\Sig,$} with \emph{unknown} matrices $A_i, B_i,$ and $D_i$. Develop a direct data-driven framework to synthesize an ISS Lyapunov function $\mathcal V_i$ along with its corresponding controller $u_i,$ utilizing a \emph{single noisy} input–state trajectory collected from each subsystem. Subsequently, provide a small-gain compositional technique for constructing a CLF $\mathcal V$ composed of data-driven $\mathcal V_i$ and a decentralized UGA-stabilizing controller $u$ built from the local controllers $u_i$, each designed purely from data, that guarantees the infinite network is UGAS.
	\end{problem}
\end{resp}

We begin addressing Problem~\ref{prob} by presenting our data-driven framework in the following section.

\section{Data-Driven Design of ISS Lyapunov Functions} \label{sec:ISS Lyapunov}

\subsection{Data Collection}

Given an open-loop infinite network $\Siginfset$ with interconnection constraint~\eqref{eq:int constraint}, 
we start by collecting $T \in \mathbb{N}^+$ samples from the unknown \NPS\ in~\eqref{eq:subsystem reform 2}, with sampling time $\tau\in\R^+$, starting from fixed initial time $t_0$ and over the period $[t_0, t_0+(T - 1)\tau]$. We apply arbitrary admissible inputs $u_i$ over the time interval $[t_0, t_0+(T - 1)\tau]$ to each subsystem initialized at $x_i(t_0)$, and store $\mathbf U_i$ and $\mathbf X_i$. Hence, considering~\eqref{eq:data-unknown-derivatives}, the following data can be collected:
\begin{align}\label{eq:data}
	\begin{array}{llllll}
		\mathbf U_i & \hspace{-0.1cm}= & \hspace{-0.1cm}[u_i(t_0) &u_i(t_0 + \tau) & \dots & u_i(t_0 +(T-1)\tau)] \! \in \! \R^{m_i\times T}\!\!,\\
		\mathbf W_i & \hspace{-0.1cm}= & \hspace{-0.1cm}[w_i(t_0) & w_i(t_0 + \tau) & \dots & w_i(t_0 +(T-1)\tau)] \! \in \! \R^{\sigma_i\times T}\!\!,\\
		\mathbf X_i & \hspace{-0.1cm}= & \hspace{-0.1cm}[x_i(t_0) & x_i(t_0 + \tau) & \dots & x_i(t_0 + (T-1)\tau)] \! \in \! \R^{n_i\times T}\!\!.
	\end{array}
\end{align} 
We note that since the interconnection topology among the subsystems is known as per~\eqref{eq:int constraint} and the state data $\mathbf X_i$ of all subsystems are collected, $\mathbf W_i$ can be obtained as  $\mathbf W_i=(\mathbf W_{ij}) _{j\in \mathbf M_i}\in\R^{\sigma_i\times T}$ with
\begin{align}\label{eq:W_i_int constraint}
	\mathbf W_{i j}= \mathbf X_j,\quad \forall i\in\N^+, \quad \forall j\in \mathbf M_i.
\end{align}
The sampled data in~\eqref{eq:data} are referred to as a {set of input–state trajectories}.

{Let the state derivative data be denoted by}
\begin{align}\label{eq:data-unknown-derivatives}
	\begin{array}{llllll}
		\hat{\mathbf X}_i^{\mathsf d} & \hspace{-0.1cm}= & \hspace{-0.1cm}[\dot x_i(t_0) & \dot x_i(t_0 + \tau) & \dots & \dot x_i(t_0 + (T-1)\tau)]\in\R^{n_i\times T}\!\!.
	\end{array}
\end{align} 
However, {the state derivatives in $\hat{\mathbf X}_i^{\mathsf d}$ cannot be} directly measured. Hence, we numerically approximate their values by forward differences
\begin{align}\label{Newd}
	\dot x_i(t_0 \!+\! k\tau)\! =\! \frac{x_i(t_0 \! + \! (k \! + \! 1)\tau) \!-\! x_i(t_0 + k\tau)}{\tau} \!+\! \delta_i(t_0 \!+\! k\tau),
\end{align}
for $k\in\{0, \dots, T-1\}$, with $\delta_i(t_0 + k\tau)$ being the approximation error (similar to the higher-order terms appearing in Taylor's expansion~\citep{guo2024data}), which can be considered as \emph{measurement noise}~\citep{guo2021data}. We note that we collect one extra sample in $\mathbf X_i$, {\ie, $T+1$ samples,} solely for the calculation of the state derivatives {when $k=T-1$, which requires $x_i(t_0 + T\tau)$,} while we use $T$ samples in our data-driven analysis. Hence, according to~\eqref{Newd}, the {state derivative data satisfy}
\[
\hat{\mathbf X}_i^{\mathsf d} = \mathbf X_i^{\mathsf d} + \Delta_i, \quad \text{where}
\]
\[
\Delta_i = [\delta_i(t_0)~\delta_i(t_0+\tau)~\dots~\delta_i(t_0 + (T-1)\tau)]\in\R^{n_i\times T}
\] 
represents the unknown noise corrupting the measurement, and $\mathbf X_i^{\mathsf d}$ defined by
\[
(\mathbf X_i^{\mathsf d})_k \!=\! \frac{x_i(t_0+(k+1)\tau) \!-\! x_i(t_0 + k\tau)}{\tau},~ \!k\!=\!0,\ldots,T-1,
\]
is {the available state derivative approximation for analysis.} In this regard, we make the following assumption.
\begin{assumption}\label{asmp:noise assumption}
	The measurement noise trajectory $\Delta_i\in\R^{n_i\times T}$ is unknown, but it satisfies $\Delta_i\Delta_i^\top\preceq\Lambda_i\Lambda_i^\top$\!, for some known $\Lambda_i$ of appropriate dimension. 
\end{assumption}
{The upper bound in Assumption~\ref{asmp:noise assumption} indicates that the noise energy is bounded~\citep{van2020noisy}, which is reasonable since it accounts for the error in approximating the state derivative.}
\begin{remark}\label{rem:noise}
	A practically relevant situation in which the bound $\Delta_i\Delta_i^\top \preceq \Lambda_i \Lambda_i^\top$ is guaranteed arises when there exists a known constant $\bar{\delta}>0$ such that each column $\delta_{i,k}$ of $\Delta_i$ satisfies $|\delta_{i,k}|_2^2 \le \bar{\delta}$ for all $k\in\{1,\ldots,T\}$. Accordingly, for any $y \in \R^{n_i}$, using $\Delta_i\Delta_i^\top=\sum_{k=1}^{T}\delta_{i,k}\delta_{i,k}^\top$ and the Cauchy-Schwarz inequality, one has
	\begin{align*}
		y^\top \Delta_i\Delta_i^\top y
		&= \sum_{k=1}^{T} y^\top \delta_{i,k}\delta_{i,k}^\top y
		= \sum_{k=1}^{T}(\delta_{i,k}^\top y)^2 \\
		&\le \sum_{k=1}^{T}|\delta_{i,k}|_2^2\,|y|_2^2
		\le \sum_{k=1}^{T}\bar{\delta}\,|y|_2^2
		= \bar{\delta}T\,|y|_2^2 \\
		&= y^\top\!\left(\bar{\delta}T\,\mathds I_{n_i}\right)\!y .
	\end{align*}
	Since the above inequality holds for all $y$, it follows that
	\[
	\Delta_i\Delta_i^\top \preceq \bar{\delta}T\,\mathds I_{n_i},
	\]
	which implies $\Delta_i\Delta_i^\top \preceq \Lambda_i\Lambda_i^\top$ is satisfied by choosing $\Lambda_i \coloneqq \sqrt{\bar{\delta}T}\,\mathds I_{n_i}$.
\end{remark}

Although the exact adversarial matrix $D_i$ is unknown, we impose the following assumption to proceed with the development of the first main theorem of this work.

\begin{assumption}\label{asmp:D}
	The adversarial input matrix $D_i$ is unknown; however, an upper bound on its induced $2$-norm is available, namely $\|D_i\|_2 \leq \varkappa_i$, where $\varkappa_i$ is a known positive constant.
\end{assumption}

Since $\VecF(x_i)$ and $\mathcal G_i(x_i)$ are known dictionaries, using collected data $\mathbf X_i$ and $\mathbf U_i$, one can also construct matrices
\begin{subequations}\label{eq:mon traj}
	\begin{align}
		\mathbf J_i =  \big[&\VecF(x_i(t_0)\!)~~ \VecF(x_i(t_0 + \tau)\!) ~~\dots \notag\\
		&~~~~~~~~~~~~~~\VecF(x_i(t_0 + (T-1)\tau)\!)\big]\!\in\R^{N_i\times T}\!,\\
		\mathbf G_i  =  \big[&\mathcal G_i(x_i(t_0)\!)u_i(t_0) ~~\mathcal G_i(x_i(t_0 + \tau)\!)u_i(t_0 + \tau) ~~\dots \notag\\
		&\mathcal G_i(x_i(t_0 + (T-1)\tau)\!)u_i(t_0 + (T-1)\tau)\big] \!\! \in \! \R^{M_i\times T}\!.
	\end{align}
\end{subequations}
Accordingly, the dynamics of subsystem $\Sigma_i$ in~\eqref{eq:subsystem reform 1} imply the following relation between the data matrices introduced above:
\begin{align}
	\label{eq:Sigma_i_data}
	\mathbf X_i^{\mathsf d} = \underbrace{A_i\mathbf J_i + B_i\mathbf G_i + D_i\mathbf W_i}_{\hat{\mathbf X}_i^{\mathsf d}} - \Delta_i.
\end{align}    
We concatenate the unknown matrices $A_i$, $B_i$, and $D_i$ as $S_i = [A_i~B_i~D_i]$. We also denote $R_i = [A_i~B_i]$. On this basis, we consolidate the available data matrices $\mathbf J_i$, $\mathbf G_i$, and $\mathbf W_i$  into the matrix
\[
\mathbf Q_i = [\mathbf J_i^\top~\mathbf G_i^\top~ \mathbf W_i^\top]^\top\in\R^{s_i\times T}.
\]
{With these definitions, we can rewrite \eqref{eq:Sigma_i_data} in a shortened form as}
\begin{align*}
	\mathbf X_i^{\mathsf d} = S_i \mathbf Q_i - \Delta_i,
\end{align*}
equivalently, $\Delta_i = S_i\mathbf Q_i -\mathbf X_i^{\mathsf d} = -(\mathbf X_i^{\mathsf d}-S_i\mathbf Q_i).$
Then, according to Assumption~\ref{asmp:noise assumption}, we obtain
\begin{align*}
	(\mathbf X_i^{\mathsf d}-S_i\mathbf Q_i)(\mathbf X_i^{\mathsf d}-S_i{\mathbf Q_i})^\top- \Lambda_i\Lambda_i^\top\preceq0,
\end{align*}
which results in the following {linear matrix inequality:}
\begin{align}
	\label{eq:Z_i__and__Xi_i-matrices}
	\underbrace{\begin{bmatrix}
			\mathds I_{n_i}\\
			S_i^\top
		\end{bmatrix}^{\!\!\top} \overbrace{\begin{bmatrix}
				\mathbf X_i^{\mathsf d}{\mathbf X_i^{\mathsf d}}^\top-\Lambda_i\Lambda_i^\top & -\mathbf X_i^{\mathsf d}\mathbf Q_i^\top\\
				\star & \mathbf Q_i \mathbf Q_i^\top
		\end{bmatrix}}^{\mathds Z_i}\begin{bmatrix}
			\mathds I_{n_i}\\
			S_i^\top
	\end{bmatrix}}_{\Xi_i}\preceq 0,
\end{align}
where $\mathds Z_i\in\R^{(n_i + s_i)\times (n_i + s_i)},$ with $s_i = M_i+N_i+\sigma_i$, and the symbol $\star$ is the transpose of a symmetric element of a (symmetric) matrix.

\subsection{Input-to-State Stabilizing Controllers for Subsystems}

In this section, we propose a data-driven approach for designing ISS Lyapunov functions and the corresponding stabilizing controllers for subsystems of the infinite network. 

Given a \NPS\ {$\Sig$} as defined in~\eqref{eq:subsystem reform 1},
consider a controller 
\begin{align}
	\label{eq:Controller-subsystem}
	u_i = \mathbf K_i(x_i)P_ix_i,  
\end{align}
where $\mathbf K_i:\R^{n_i} \to \R^{m_i\times n_i}$ is a matrix polynomial in $x_i$
and $P_i\in\R^{n_i\times n_i}$ is a constant matrix to be designed later.

As an ISS Lyapunov function candidate for the $i$-th subsystem, we choose a {quadratic} function
\begin{align}
	\label{eq:ISS-LF-subsystem}
	\Vs = x_i^\top P_ix_i,\quad x_i\!\in\R^{n_i},
\end{align}
with $P_i \succ 0$.

{In the following lemma, we} introduce a useful representation of the right-hand side of $\Sigma_i$ under controller $u_i$.
\begin{lemma}\label{lem:Lemma-rep}
	{The right-hand side of the closed-loop system \eqref{eq:subsystem reform 1}, under the controller designed in~\eqref{eq:Controller-subsystem}, is represented by}
	\begin{subequations}
	\begin{align}
		A_i\VecF(x_i) &+ B_i\mathcal G_i(x_i)u_i + D_iw_i\notag\\
		&=S_i\begin{bmatrix}
			\Psi_i(x_i)\\
			\mathcal G_i(x_i)\mathbf K_i(x_i)P_i\\
			\bbzero_{\sigma_i\times n_i}
		\end{bmatrix} x_i+ D_i w_i\label{eq:representation 1}\\
		&= R_i
		\begin{bmatrix}
			\Psi_i(x_i)\\
			\mathcal G_i(x_i)\mathbf K_i(x_i)P_i
		\end{bmatrix}x_i + D_i w_i.\label{eq:representation 2}
	\end{align}
		\end{subequations}
\end{lemma}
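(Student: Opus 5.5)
The identity is a direct algebraic verification, so the plan is to substitute the controller and the factorization~\eqref{eq:trans x} into the closed-loop right-hand side and then regroup into block-matrix products.

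First, by~\eqref{eq:trans x}, I will replace $A_i\VecF(x_i)$ by $A_i\Psi_i(x_i)x_i$. Inserting the controller~\eqref{eq:Controller-subsystem} gives $B_i\mathcal G_i(x_i)u_i = B_i\mathcal G_i(x_i)\mathbf K_i(x_i)P_ix_i$. The right-hand side then becomes
\[
A_i\Psi_i(x_i)x_i + B_i\mathcal G_i(x_i)\mathbf K_i(x_i)P_ix_i + D_iw_i .
\]
Every term except the last is a matrix times $x_i$.

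For~\eqref{eq:representation 2}, I will factor $x_i$ out on the right. The first two terms are then the block product of $R_i=[A_i~B_i]$ with the stacked matrix $[\Psi_i(x_i)^\top~(\mathcal G_i(x_i)\mathbf K_i(x_i)P_i)^\top]^\top$. The dimensions are compatible: the first block has $N_i$ rows, matching the $N_i$ columns of $A_i$, and the second has $M_i$ rows, matching $B_i$.

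For~\eqref{eq:representation 1}, I will append a zero block $\bbzero_{\sigma_i\times n_i}$ to the stacked matrix and use $S_i=[A_i~B_i~D_i]$. The extra product is $D_i\bbzero_{\sigma_i\times n_i}x_i=\bbzero_{n_i}$, so the value is unchanged. The genuine adversarial term $D_iw_i$ is kept outside the product in both forms.

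The only point needing care is this bookkeeping of block dimensions, in particular that the zero block has $\sigma_i$ rows to match the columns of $D_i$. There is no real analytical obstacle.
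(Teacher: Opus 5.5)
Your proposal is correct and is essentially the paper's proof: substitute $\mathcal F_i(x_i)=\Psi_i(x_i)x_i$ and $u_i=\mathbf K_i(x_i)P_ix_i$, factor out $x_i$, and read off the block products with $R_i=[A_i~B_i]$ and $S_i=[A_i~B_i~D_i]$, where the appended $\bbzero_{\sigma_i\times n_i}$ block contributes nothing. Your explicit check that the stacked blocks have $N_i$, $M_i$, and $\sigma_i$ rows, matching the column counts of $A_i$, $B_i$, and $D_i$, is a small detail the paper leaves implicit.
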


\textbf{Proof.} Applying the controller $u_i = \mathbf K_i(x_i)P_ix_i$ to the dynamics in~\eqref{eq:subsystem reform 1}, we have
	\begin{align*}
		\dot x_i &\overset{\hphantom{\eqref{eq:trans x}}}{=} A_i \VecF(x_i) + B_i \mathcal G_i(x_i)u_i +D_iw_i\\
		&\overset{\eqref{eq:trans x}}{=} A_i\Psi_i(x_i)x_i + B_i \mathcal G_i(x_i)\mathbf K_i(x_i)P_ix_i + D_i w_i \\
		&\overset{\hphantom{\eqref{eq:trans x}}}{=} (A_i\Psi_i(x_i) + B_i\mathcal G_i(x_i)\mathbf K_i(x_i)P_i)x_i + D_i w_i.
	\end{align*}
	Now, recalling our notation for $S_i$ and $R_i$, we attain
	\begin{align*}
		\dot x_i &= \overbrace{[A_i~B_i~D_i]}^{S_i}\begin{bmatrix}
			\Psi_i(x_i)\\
			\mathcal G_i(x_i)\mathbf K_i(x_i)P_i\\
			\bbzero_{\sigma_i\times n_i}
		\end{bmatrix}x_i + D_i w_i \\
		&= S_i\begin{bmatrix}
			\Psi_i(x_i)\\
			\mathcal G_i(x_i)\mathbf K_i(x_i)P_i\\
			\bbzero_{\sigma_i\times n_i}
		\end{bmatrix}x_i + D_i w_i,\\
		&= R_i
		\begin{bmatrix}
			\Psi_i(x_i)\\
			\mathcal G_i(x_i)\mathbf K_i(x_i)P_i
		\end{bmatrix}x_i + D_i w_i,
	\end{align*}
	which concludes the proof. $\hfill\blacksquare$

In the rest of this section, we prove that the $i$-th subsystem is ISS via the controller $u_i$, with $\Vs$ as its ISS Lyapunov function, provided that $\mathbf K_i(x_i)$ and $P_i$ satisfy the following assumption.

\begin{assumption}\label{asmp:main con}
	Assume that there exist constants $\kappa_i,\vartheta_i\in\R^+$, a scalar polynomial $\gamma_i(x_i) \geq 0$, such that a matrix $P_i\succ 0$, and a matrix polynomial $\mathbf K_i$ satisfy
	\begin{align}\label{eq:main con}
		\mathds H_i(x_i) + {(\vartheta_i + \kappa_i)} \mathds P_i - \gamma_i(x_i) \mathds Z_i \preceq 0,
	\end{align}
	where $\mathds Z_i$ is given by \eqref{eq:Z_i__and__Xi_i-matrices},
	and $\mathds H_i(x_i)$ is a block matrix tailored for the representation \eqref{eq:representation 1}:
	\begin{subequations}\label{eq:main conditions}
		\begin{align}
			\mathds H_i(x_i)&= \begin{bmatrix}
				\bbzero_{n_i\times n_i} & \conblock^{\!\!\top}\\
				\star & \bbzero_{s_i\times s_i}
			\end{bmatrix}\!\!,\\
			\mathds P_i &= \begin{bmatrix}
				P_i^{-1} & \bbzero_{n_i\times s_i}\\
				\bbzero_{s_i\times n_i} & \bbzero_{s_i\times s_i}
			\end{bmatrix}\!\!,
		\end{align}
	\end{subequations}
	and $\mathds H_i(x_i),\mathds P_i\in\R^{(n_i + s_i)\times (n_i + s_i)},$ with $s_i = M_i+N_i+\sigma_i$.
\end{assumption}
\begin{remark}
	One can define $\Phi_i\Let P_i^{-1}$ as a decision variable when enforcing condition~\eqref{eq:main con}, and design $\Phi_i \succ 0$. Once $\Phi_i$ is designed, $P_i$ can be obtained by inverting $\Phi_i$, \ie, $\Phi_i^{-1} = (P_i^{-1})^{-1} = P_i$.
\end{remark}
Utilizing the collected data in~\eqref{eq:data}, the closed-loop dynamics in~\eqref{eq:representation 1}, and {under Assumptions~\ref{asmp:noise assumption}--\ref{asmp:main con},} we present our first key contribution in the following theorem, which ensures that each subsystem is rendered ISS.

\begin{theorem}\label{thm:main}
	Consider a \NPS\ {$\Sigma_i\!=\!\!(A_i,B_i, D_i,$ $\mathcal F_i,\mathcal G_i)$}, characterized by unknown matrices $A_i$, $B_i$, and $D_i$. Let Assumptions~\ref{asmp:noise assumption}--\ref{asmp:main con} hold.
	Then $u_i$ given by \eqref{eq:Controller-subsystem} renders $\Sigma_i$ ISS, and  
	$\Vs$ as defined in \eqref{eq:ISS-LF-subsystem}
	is the corresponding ISS Lyapunov function with $\underline{\alpha}_i = \lambda_{\min}(P_i)$, $\overline{\alpha}_i = \lambda_{\max}(P_i)$, $ \rho_i =  \frac{\Vert\sqrt{P_i}\Vert_2^2\varkappa_i^2}{\vartheta_i}$, for some $\vartheta_i\in\R^+$.
\end{theorem}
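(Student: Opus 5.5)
The sandwich bound \eqref{eq:ISS-con1} is immediate from the quadratic form $\Vs=x_i^\top P_ix_i$ with $P_i\succ 0$: Rayleigh quotients give $\lambda_{\min}(P_i)|x_i|_2^2\le \Vs\le\lambda_{\max}(P_i)|x_i|_2^2$. The work lies in \eqref{eq:ISS-con2}.

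Using Lemma~\ref{lem:Lemma-rep}, the Lie derivative along the closed loop is
\[
\dot{\mathcal V}_i(x_i)=2x_i^\top P_iS_i\Theta_i(x_i)x_i+2x_i^\top P_iD_iw_i,
\]
where $\Theta_i(x_i)$ is the stacked matrix in \eqref{eq:representation 1}. Set $z_i:=P_ix_i$, so $x_i=P_i^{-1}z_i$. Then $\Theta_i(x_i)P_i^{-1}$ is exactly the block $\conblock$ appearing in $\mathds H_i$, and the first term becomes $z_i^\top\big(S_i\,\mathrm{blk}+\mathrm{blk}^\top S_i^\top\big)z_i$. A direct computation shows this equals
\[
z_i^\top\begin{bmatrix}\mathds I_{n_i}\\ S_i^\top\end{bmatrix}^{\!\top}\mathds H_i(x_i)\begin{bmatrix}\mathds I_{n_i}\\ S_i^\top\end{bmatrix}z_i .
\]
The same congruence applied to $\mathds P_i$ gives $z_i^\top P_i^{-1}z_i=\Vs$. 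Applied to $\mathds Z_i$, it gives $\Xi_i$, which is $\preceq 0$ by \eqref{eq:Z_i__and__Xi_i-matrices}, i.e.\ by Assumption~\ref{asmp:noise assumption}.

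Next I pre- and post-multiply \eqref{eq:main con} by $[\mathds I_{n_i}~S_i]$ and its transpose and evaluate at $z_i$. Because $\gamma_i(x_i)\ge 0$ and $\Xi_i\preceq0$, the term $-\gamma_i\Xi_i$ is nonnegative, so it can be dropped. This yields
\[
2x_i^\top P_iS_i\Theta_i x_i\le-(\vartheta_i+\kappa_i)\Vs .
\]
This S-procedure step is the crux: it certifies the inequality for every $S_i$ consistent with the data, and hence for the true $S_i$. The main care needed is checking that the $D_i$ columns of $S_i$ are multiplied by the zero block $\bbzero_{\sigma_i\times n_i}$, so that the unknown $D_i$ does not enter this term.

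For the cross term I use Young's inequality:
\[
2x_i^\top P_iD_iw_i=2(\sqrt{P_i}x_i)^\top\sqrt{P_i}D_iw_i\le\vartheta_i\Vs+\tfrac{1}{\vartheta_i}|\sqrt{P_i}D_iw_i|_2^2 .
\]
By Assumption~\ref{asmp:D}, the last term is at most $\tfrac{\|\sqrt{P_i}\|_2^2\varkappa_i^2}{\vartheta_i}|w_i|_2^2$. Adding the two bounds, the $\vartheta_i\Vs$ terms cancel, leaving
\[
\dot{\mathcal V}_i\le-\kappa_i\Vs+\rho_i|w_i|_2^2 ,
\]
which is \eqref{eq:ISS-con2} with the claimed $\rho_i$ and with $u_i$ given by \eqref{eq:Controller-subsystem}.
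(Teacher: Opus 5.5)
Your proof is correct and follows essentially the same route as the paper. Both arguments use Lemma~\ref{lem:Lemma-rep} to write the drift term of $\dot{\mathcal V}_i$ as the congruence of $\mathds H_i(x_i)$ by $[\mathds I_{n_i}~S_i]$ evaluated at $P_ix_i$, bound the cross term by Cauchy--Schwarz/Young together with $\|D_i\|_2\le\varkappa_i$, and then combine $\Xi_i\preceq 0$ with $\gamma_i(x_i)\ge 0$ to extract the decay from \eqref{eq:main con}. The differences are cosmetic: you apply the congruence before Young's inequality, and you spell out the sufficient direction of the S-procedure explicitly, whereas the paper invokes it by name.
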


\textbf{Proof.}
	First, we show that condition~\eqref{eq:ISS-con1} holds for the candidate ISS Lyapunov function. Since
	\begin{align*}
		\lambda_{\min}(P_i) |x_i|_2^2 \leq \underbrace{x_i^\top P_i x_i}_{\Vs} \leq \lambda_{\max}(P_i) |x_i|_2^2, \quad {\forall} x_i \in\R^{n_i},
	\end{align*}
	$\mathcal V_i$ satisfies \eqref{eq:ISS-con1} with $\underline{\alpha}_i = \lambda_{\min}(P_i)$ and $\overline{\alpha}_i = \lambda_{\max}(P_i)$. 
	Now, we demonstrate that condition~\eqref{eq:ISS-con2} holds, as well. Utilizing the closed-loop representation~\eqref{eq:representation 1}, we have
	{\allowdisplaybreaks
		\begin{align*}
			&\Vps 
			=2x_i^\top P_i\Big(A_i\VecF(x_i) + B_i\mathcal G_i(x_i)u_i + D_iw_i\Big)\\
			&= 2x_i^\top P_i\Big(S_i\begin{bmatrix}
				\Psi_i(x_i)\\
				\mathcal G_i(x_i)\mathbf K_i(x_i)P_i\\
				\bbzero_{\sigma_i\times n_i}
			\end{bmatrix} x_i+ D_i w_i\Big)\\
			&=2x_i^\top P_iS_i\begin{bmatrix}
				\Psi_i(x_i)\\
				\mathcal G_i(x_i)\mathbf K_i(x_i) P_i\\
				\bbzero_{\sigma_i\times n_i}
			\end{bmatrix} x_i+ 2x_i^\top P_iD_iw_i \\
			&=2x_i^\top P_iS_i\conblock P_i x_i+ 2x_i^\top P_iD_iw_i\\
			&=x_i^\top P_iS_i\conblock P_i x_i\\
			&\hphantom{=}~+ x_i^\top P_i\conblock^{\!\!\top} \!\!\! S_i^\top P_i x_i+ 2x_i^\top P_iD_iw_i\\
			& = x_i^\top P_i\Big(S_i\conblock + \conblock^{\!\!\top} \!\!\! S_i^\top\Big) P_i x_i\\
			&\hphantom{=}~+ 2x_i^\top P_iD_iw_i\\
			&= x_i^\top P_i\begin{bmatrix}
				\mathds I_{n_i}\\
				S_i^\top
			\end{bmatrix}^{\!\!\top}\!\!\begin{bmatrix}
				\bbzero_{n_i\times n_i} & \conblock^{\!\!\top}\\
				\star & \bbzero_{s_i\times s_i}
			\end{bmatrix}\!\!\begin{bmatrix}
				\mathds I_{n_i}\\
				S_i^\top
			\end{bmatrix} P_i x_i\\
			&\hphantom{=}~+ 2\underbrace{x_i^\top \sqrt{P_i}\sqrt{P_i}D_iw_i}_{\text{(i)}}.
		\end{align*}
	}
	By applying the Cauchy-Schwarz inequality~\citep{bhatia1995cauchy} over $\text{(i)}$ as $a^\top b \leq |a|_2  |b|_2,$ for any $a,b\in \mathbb R^n$, with $a^\top = x_i^\top \sqrt{P_i}$ and $b = \sqrt{P_i}D_iw_i$, and by leveraging Young's inequality~\citep{young1912classes} as $|a|_2 |b|_2 \leq \frac{\vartheta_i}{2} |a|_2^2+\frac{1}{2\vartheta_i} |b|_2^2,$ for any $\vartheta_i>0$, we get
	
	\begin{align*}
		\Vps&\leq x_i^\top\! P_i\!\begin{bmatrix}
			\mathds I_{n_i}\\
			S_i^\top
		\end{bmatrix}^{\!\!\top}\!\!\!\underbrace{\begin{bmatrix}
				\bbzero_{n_i\times n_i} & \conblock^{\!\!\top}\\
				\star & \bbzero_{s_i\times s_i}
		\end{bmatrix}}_{\mathds H_i(x_i)}\!\!\!\begin{bmatrix}
			\mathds I_{n_i}\\
			S_i^\top
		\end{bmatrix}\!\! P_i x_i\\
		&\hphantom{=}~+ \vartheta_i\underbrace{x_i^\top P_i x_i}_{\Vs} + \frac{\Vert \sqrt{P_i}\Vert_2^2 \Vert D_i\Vert_2^2}{\vartheta_i} |w_i|_2^2\\
		&{\leq h_i(x_i) + \vartheta_i\Vs + \frac{\Vert \sqrt{P_i}\Vert_2^2 \varkappa_i^2}{\vartheta_i}|w_i|_2^2}\\
		& = h_i(x_i) + \vartheta_i\Vs + \rho_i|w_i|_2^2,
	\end{align*}
	where 
	\[
	h_i(x_i) =  x_i^\top P_i\begin{bmatrix}
		\mathds I_{n_i}\\
		S_i^\top
	\end{bmatrix}^{\!\!\top}\!\!\mathds H_i(x_i)\!\!\begin{bmatrix}
		\mathds I_{n_i}\\
		S_i^\top
	\end{bmatrix} P_i x_i, \quad\!\! \rho_i {=} \frac{\Vert \sqrt{P_i}\Vert_2^2 \varkappa_i^2}{\vartheta_i}.
	\]
	To show \eqref{eq:ISS-con2}, we need to ensure that
	\begin{align}\label{eq:new H}
		\underbrace{h_i(x_i) + \vartheta_i \Vs}_{\mathcal H_i(x_i)} \leq -\kappa_i \Vs,
	\end{align}
	or equivalently,
	\begin{align}\label{eq:new S}
		\mathcal H_i(x_i) + \kappa_i\Vs \leq 0
	\end{align}
	is met, which is not possible at this step since $S_i$ is unknown.
	
	\noindent Now, we invoke {our knowledge about the system based on the collected data and
		utilize \eqref{eq:Z_i__and__Xi_i-matrices}}. By multiplying $\Xi_i\in\R^{n_i\times n_i}$ by $x_i^\top P_i$ and $P_ix_i$ from left and right, respectively, we attain
	\begin{align}\label{eq:proof-new1}
		\underbrace{x_i^\top P_i \Xi_i P_i x_i}_{z_i(x_i)}\leq 0.
	\end{align}
	Then, utilizing the S-procedure~\citep{yakubovich2004stability},
	we can enforce~\eqref{eq:new S}, while respecting~\eqref{eq:proof-new1}, if there exists $\gamma_i(x_i)\geq 0$ such that
	\begin{align}\label{eq:S-procedure}
		\mathcal H_i(x_i) + \kappa_i\Vs - \gamma_i(x_i) z_i(x_i) \leq 0,
	\end{align}
	for some $\kappa_i,\vartheta_i\in\R^+$.
	Therefore, ensuring condition~\eqref{eq:main con}, with $\Vs$ reformulated as
	\begin{align*}
		\Vs& = x_i^\top P_i x_i = x_i^\top \underbrace{P_i P_i^{-1}}_{\mathds I_{n_i}}P_ix_i \\
		&= x_i^\top P_i\begin{bmatrix}
			\mathds I_{n_i}\\
			S_i^\top
		\end{bmatrix}^{\!\!\top}\!\!\underbrace{\begin{bmatrix}
				P_i^{-1} & \bbzero_{n_i\times s_i}\\
				\bbzero_{s_i\times n_i} & \bbzero_{s_i\times s_i}
		\end{bmatrix}}_{\mathds P_i}\!\!\begin{bmatrix}
			\mathds I_{n_i}\\
			S_i^\top
		\end{bmatrix} P_i x_i,
	\end{align*}
	yields \eqref{eq:S-procedure}, and thereby~\eqref{eq:new H} holds. Hence, we have
	\begin{align*}
		\Vps &\leq -\kappa_i\Vs+\rho_i |w_i|_2^2,
	\end{align*}
	which concludes the proof.$\hfill\blacksquare$

\begin{remark}
	To enforce condition~\eqref{eq:main con}, one may employ established software packages such as \textsf{SOSTOOLS}~\citep{papachristodoulou2013sostools} in conjunction with a semidefinite programming (SDP) solver, such as \textsf{SeDuMi}~\citep{sturm1999using}.
\end{remark}

As stated in Assumption~\ref{asmp:D}, exact knowledge of $D_i$ is not strictly required to compute $\rho_i = \frac{\Vert \sqrt{P_i} \Vert_2^{2} \Vert D_i \Vert_2^{2}}{\vartheta_i}$; an upper bound on $\Vert D_i \Vert_2$ suffices. Nevertheless, the dimensionality of the matrices in~\eqref{eq:Z_i__and__Xi_i-matrices} and \eqref{eq:main conditions} (\ie, matrices $\mathds H_i(x_i), \mathds P_i, \mathds Z_i$)
depends directly on the number of subsystems influencing a given subsystem, which may introduce scalability challenges when the network interconnection is dense. These difficulties can, however, be mitigated when $D_i$ is known. Since $D_i$ represents the interconnection weights among subsystems, this assumption can be realistic in certain interconnected networks. Accordingly, we propose the subsequent results, which assume that $D_i$ is known and yield more scalable conditions. {Prior to this, we introduce the following assumption, which provides the sufficient conditions for designing $\mathbf K_i(x_i)$ as well as $P_i \succ 0$.}

{\begin{assumption}\label{asmp:main con 2}
		Assume that there exist constants $\kappa_i,\vartheta_i\in\R^+$, a scalar polynomial $\gamma_i(x_i) \geq 0$, a constant matrix $P_i\in\R^{n_i\times n_i}$, where $P_i\succ 0$, and a matrix polynomial $\mathbf K_i(x_i)\in\R^{m_i\times n_i}$, such that 
		\begin{align}\label{eq:main con 2}
			\mathds C_i(x_i) + (\vartheta_i + \kappa_i) \mathds P_i - \gamma_i(x_i) \mathds Y_i \preceq 0,
		\end{align}
		where
		\begin{align*}
			\mathds C_i(x_i) &= \begin{bmatrix}
				\bbzero_{n_i\times n_i} & \conblockD^{\!\!\top}\\
				\star & \bbzero_{r_i\times r_i}
			\end{bmatrix}\!\!,\\
			\mathds P_i &= \begin{bmatrix}
				P_i^{-1} & \bbzero_{n_i\times r_i}\\
				\bbzero_{r_i\times n_i} & \bbzero_{r_i\times r_i}
			\end{bmatrix}\!\!, \\
			\mathds Y_i &= 
			\begin{bmatrix}
				{\tilde{\mathbf X}_i^{\mathsf d}\tilde{\mathbf X}_i^{\mathsf d}}^{\!\top}\!\!-\!\Lambda_i\Lambda_i^\top & -\tilde{\mathbf X}_i^{\mathsf d}\mathbf L_i^\top\\
				\star & \mathbf L_i \mathbf L_i^\top
			\end{bmatrix}\!\!,
		\end{align*}
		and $\mathds C_i(x_i),\mathds P_i,\mathds Y_i\in\R^{(n_i + r_i)\times (n_i + r_i)},$ with $r_i = M_i+N_i$, $\tilde{\mathbf X}_i^{\mathsf d} = \mathbf X_i^{\mathsf d}\!-\!D_i\mathbf W_i$, and $\mathbf L_i = [\mathbf J_i^\top~ \mathbf G_i^\top]^\top\in\R^{r_i\times T}$.
\end{assumption}}

{In the subsequent theorem, we leverage the gathered data in~\eqref{eq:data}, the closed-loop representation in~\eqref{eq:representation 2}, and Assumptions~\ref{asmp:noise assumption} and~\ref{asmp:main con 2} to ensure that each subsystem achieves ISS when $D_i$ is known.}

{\begin{theorem}\label{thm:main 2}
		Consider a \NPS\ {$\Sigma_i\!=\!(A_i,B_i, D_i,$ $\mathcal F_i,\mathcal G_i)$}, characterized by unknown matrices $A_i$ and $B_i$. Let Assumptions~\ref{asmp:noise assumption} and \ref{asmp:main con 2} hold. Then $u_i$ designed as in~\eqref{eq:Controller-subsystem} renders $\Sigma_i$ ISS, and $\Vs$ as defined in~\eqref{eq:ISS-LF-subsystem} is the corresponding ISS Lyapunov function with $\underline{\alpha}_i = \lambda_{\min}(P_i)$, $\overline{\alpha}_i = \lambda_{\max}(P_i)$, $ \rho_i =  \frac{\Vert\sqrt{P_i}\Vert_2^2\Vert D_i \Vert_2^2}{\vartheta_i}$, for some $\vartheta_i\in\R^+$.
\end{theorem}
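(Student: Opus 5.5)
The argument will mirror the proof of Theorem~\ref{thm:main}, with two changes. The known term $D_iw_i$ is moved to the data side, and the representation~\eqref{eq:representation 2} in terms of $R_i=[A_i~B_i]$ replaces~\eqref{eq:representation 1}.

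The sandwich bound~\eqref{eq:ISS-con1} is immediate from $\lambda_{\min}(P_i)|x_i|_2^2\le x_i^\top P_ix_i\le\lambda_{\max}(P_i)|x_i|_2^2$. This gives $\underline{\alpha}_i=\lambda_{\min}(P_i)$ and $\overline{\alpha}_i=\lambda_{\max}(P_i)$.

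Next I derive the data inequality for the unknown $R_i$. From~\eqref{eq:Sigma_i_data}, $\mathbf X_i^{\mathsf d}=A_i\mathbf J_i+B_i\mathbf G_i+D_i\mathbf W_i-\Delta_i$. Since $D_i$ and $\mathbf W_i$ are known, this rearranges to $\tilde{\mathbf X}_i^{\mathsf d}=\mathbf X_i^{\mathsf d}-D_i\mathbf W_i=R_i\mathbf L_i-\Delta_i$. Assumption~\ref{asmp:noise assumption} then yields $(\tilde{\mathbf X}_i^{\mathsf d}-R_i\mathbf L_i)(\tilde{\mathbf X}_i^{\mathsf d}-R_i\mathbf L_i)^\top-\Lambda_i\Lambda_i^\top\preceq0$. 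Written as a quadratic form, this is
\begin{align*}
\Theta_i\Let\begin{bmatrix}\mathds I_{n_i}\\ R_i^\top\end{bmatrix}^{\!\top}\mathds Y_i\begin{bmatrix}\mathds I_{n_i}\\ R_i^\top\end{bmatrix}\preceq 0,
\end{align*}
which is the analogue of~\eqref{eq:Z_i__and__Xi_i-matrices}.

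For the dissipation inequality, I differentiate $\mathcal V_i$ along~\eqref{eq:representation 2}. Inserting $P_i^{-1}P_i$ as before gives
\begin{align*}
\dot{\mathcal V}_i(x_i)=x_i^\top P_i\Big(R_i\conblockD+\conblockD^{\!\top}R_i^\top\Big)P_ix_i+2x_i^\top P_iD_iw_i.
\end{align*}
The first term equals $x_i^\top P_i[\mathds I_{n_i}~R_i]\,\mathds C_i(x_i)\,[\mathds I_{n_i}~R_i]^\top P_ix_i$. I bound the cross term by Cauchy--Schwarz and Young exactly as in Theorem~\ref{thm:main}, now using the true $\|D_i\|_2$:
\begin{align*}
2x_i^\top P_iD_iw_i\le\vartheta_i\mathcal V_i(x_i)+\tfrac{\|\sqrt{P_i}\|_2^2\|D_i\|_2^2}{\vartheta_i}|w_i|_2^2.
\end{align*}
I also write $\mathcal V_i(x_i)=x_i^\top P_i[\mathds I_{n_i}~R_i]\,\mathds P_i\,[\mathds I_{n_i}~R_i]^\top P_ix_i$, which holds because only the $P_i^{-1}$ block of $\mathds P_i$ is nonzero. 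It then remains to show that the combination $c_i(x_i)+(\vartheta_i+\kappa_i)\mathcal V_i(x_i)\le0$, where $c_i(x_i)$ denotes the $\mathds C_i$-quadratic form above.

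To close the argument, I pre- and post-multiply~\eqref{eq:main con 2} by $x_i^\top P_i[\mathds I_{n_i}~R_i]$ and its transpose. This gives $c_i+(\vartheta_i+\kappa_i)\mathcal V_i-\gamma_i(x_i)\,x_i^\top P_i\Theta_iP_ix_i\le0$. Since $\gamma_i\ge0$ and $\Theta_i\preceq0$, the last term is nonnegative and can be dropped; this is the S-procedure step. Hence $\dot{\mathcal V}_i\le-\kappa_i\mathcal V_i+\rho_i|w_i|_2^2$ with $\rho_i$ as stated.

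The only delicate point is the congruence in this last step. The outer factor must be exactly $[\mathds I_{n_i}~R_i]^\top P_ix_i$, and the block dimensions $r_i=M_i+N_i$ must match $\mathbf L_i$ and $\conblockD$. Everything else is routine.
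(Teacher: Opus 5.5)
Your proposal is correct and follows the paper's proof essentially step for step. The shared steps are the eigenvalue sandwich, the representation~\eqref{eq:representation 2} in terms of $R_i$, Young's inequality on $2x_i^\top P_iD_iw_i$ with the true $\|D_i\|_2$, the data inequality with $\mathds Y_i$ obtained from $\tilde{\mathbf X}_i^{\mathsf d}=R_i\mathbf L_i-\Delta_i$, and the S-procedure via congruence of~\eqref{eq:main con 2} with $\begin{bmatrix}\mathds I_{n_i}\\ R_i^\top\end{bmatrix}P_ix_i$; in fact, you write out the final S-procedure step that the paper only describes as analogous to Theorem~\ref{thm:main} and omits.
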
}

\textbf{Proof.}
	The first part of the proof follows the same procedure as in the proof of Theorem~\ref{thm:main}, yielding the fulfillment of condition~\eqref{eq:ISS-con1} by choosing $\underline\alpha_i = \lambda_{\min}(P_i)$ and $\overline\alpha_i = \lambda_{\max}(P_i)$. To show that condition~\eqref{eq:ISS-con2} is satisfied as well, we use the closed-loop representation~\eqref{eq:representation 2} obtained in Lemma~\ref{lem:Lemma-rep}. Following the proof steps of Theorem~\ref{thm:main}, one can obtain
	\begin{align*}
			\Vps&= x_i^\top \! P_i \!\begin{bmatrix}
				\mathds I_{n_i}\\
				R_i^\top
			\end{bmatrix}^{\!\!\top}\!\!\!\begin{bmatrix}
				\bbzero_{n_i\times n_i} & \conblockD^{\!\!\top}\\
				\star & \bbzero_{r_i\times r_i}
			\end{bmatrix}\!\!\!\begin{bmatrix}
				\mathds I_{n_i}\\
				R_i^\top
			\end{bmatrix}\!\! P_i x_i\\
			&\hphantom{=}~+ 2x_i^\top P_iD_iw_i\\
			&\leq x_i^\top\! P_i\!\begin{bmatrix}
				\mathds I_{n_i}\\
				R_i^\top
			\end{bmatrix}^{\!\!\top}\!\!\!\underbrace{\begin{bmatrix}
					\bbzero_{n_i\times n_i} & \conblockD^{\!\!\top}\\
					\star & \bbzero_{r_i\times r_i}
			\end{bmatrix}}_{\mathds C_i(x_i)}\!\!\!\begin{bmatrix}
				\mathds I_{n_i}\\
				R_i^\top
			\end{bmatrix}\!\! P_i x_i\\
			&\hphantom{=}~+ \vartheta_i\underbrace{x_i^\top P_i x_i}_{\Vs} + \frac{\Vert \sqrt{P_i}\Vert_2^2 \Vert D_i\Vert_2^2}{\vartheta_i} |w_i|_2^2\\
			& = c_i(x_i) + \vartheta_i\Vs + \rho_i |w_i|_2^2,
		\end{align*}
		where $$c_i(x_i) \!=\!  x_i^\top \!P_i\!\begin{bmatrix}
			\mathds I_{n_i}\\
			R_i^\top
		\end{bmatrix}^{\!\!\top}\!\!\!\mathds C_i(x_i)\!\!\begin{bmatrix}
			\mathds I_{n_i}\\
			R_i^\top
		\end{bmatrix} \!\!P_i x_i,\quad\!\!\! \rho_i \!= \!\frac{\Vert \sqrt{P_i}\Vert_2^2 \Vert D_i\Vert_2^2}{\vartheta_i}.$$
	Additionally, with $\mathbf L_i = [\mathbf J_i^\top~ \mathbf G_i^\top]^\top$, the gathered data from each subsystem satisfies
	\begin{align*}
		\mathbf X_i^{\mathsf d} \! = \! A_i\mathbf J_i + B_i\mathbf G_i + D_i\mathbf W_i - \Delta_i \! = \! R_i \mathbf L_i + D_i\mathbf W_i - \Delta_i,
	\end{align*}
	from which one can clearly see $\Delta_i = -(\underbrace{\mathbf X_i^{\mathsf d}-D_i\mathbf W_i}_{\tilde{\mathbf X}_i^{\mathsf d}}$ $-R_i \mathbf L_i)$. As per Assumption~\ref{asmp:noise assumption}, one has
	\begin{align*}
		\begin{bmatrix}
			\mathds I_{n_i}\\
			R_i^\top
		\end{bmatrix}^{\!\!\top} \overbrace{\begin{bmatrix}
				{\tilde{\mathbf X}_i^{\mathsf d}\tilde{\mathbf X}_i^{\mathsf d}}^{\!\top}\!\!-\!\Lambda_i\Lambda_i^\top & -\tilde{\mathbf X}_i^{\mathsf d}\mathbf L_i^\top\\
				\star & \mathbf L_i \mathbf L_i^\top
		\end{bmatrix}}^{\mathds Y_i}\begin{bmatrix}
			\mathds I_{n_i}\\
			R_i^\top
		\end{bmatrix}\preceq 0.
	\end{align*}
	Using a similar argument as in the proof of Theorem~\ref{thm:main}, the rest of the proof is straightforward and omitted here.$\hfill\blacksquare$

\begin{remark}\label{rem:rank condition}
	To avoid a structural obstruction from a singular bottom-right block in~\eqref{eq:main con} in Assumption~\ref{asmp:main con}, considering~\eqref{eq:main conditions}, matrix $ \mathbf Q_i  \mathbf Q_i^\top$ should have full rank (\ie, $\mathbf Q_i  \mathbf Q_i^\top \succ 0$), meaning that $\operatorname{rank}(\mathbf Q_i) = M_i + N_i + \sigma_i$. Since $\operatorname{rank}(\mathbf Q_i) \leq \min\{M_i + N_i + \sigma_i, T\}$, one needs to collect \emph{at least} $T \geq M_i + N_i + \sigma_i$ samples to ensure that $\mathbf Q_i \mathbf Q_i^\top$ has full rank. In practice, this may not be enough, and the required amount of data should be adequately large to ensure this rank condition. Two additional points are worth noting in this regard. First, condition~\eqref{eq:main con} can possibly be feasible even if $\mathbf Q_i \mathbf Q_i^\top$ does not have full rank. In this case, however, condition~\eqref{eq:main con} is feasible only if the off-diagonal block in~\eqref{eq:main con} lies pointwise in $x_i$ within $\mathscr{R}(\mathbf Q_i \mathbf Q_i^\top)$, where $\mathscr{R}$ denotes the range of $\mathbf Q_i \mathbf Q_i^\top$. This requirement is nontrivial and, in fact, constitutes a strong additional restriction. This observation motivates our interest in ensuring that the matrix $\mathbf Q_i \mathbf Q_i^\top$ has full rank. Second, in~\eqref{eq:main con 2} of Assumption~\ref{asmp:main con 2}, where $D_i$ is assumed to be known, one can observe that at least $T \geq M_i + N_i$ samples are required from each subsystem, which is significantly fewer than in the case of Theorem~\ref{thm:main}. Requiring fewer samples in this setting is also expected, since additional information about the network structure (\emph{i.e.}, matrix $D_i$) is available.
\end{remark}

\begin{remark}
	The availability of $D_i$ significantly improves the scalability of the approach (regarding the connectivity degree) and enables us to deal with densely interconnected networks (cf. Table~\ref{tab:1}). More precisely, as the connectivity degree increases, {$\mathtt{Card}(\mathbf M_i)$} becomes larger and $\sigma_i = \!\underset{j\in\mathbf M_i}{\sum}\! n_j$ grows accordingly. As can be seen, the dimensions of matrices involved in condition~\eqref{eq:main con} directly rely on $\sigma_i$ since $\mathds H_i(x_i),\mathds P_i,\mathds Z_i\in\R^{(n_i + s_i)\times (n_i + s_i)}$ with $s_i = M_i + N_i +\sigma_i$, while the size of the matrices in condition~\eqref{eq:main con 2} are independent of $\sigma_i$. As an example, consider a dense infinite network of scalar subsystems, \ie, $n_i = 1$, with $M_i = 1$ and $N_i = 3$, where {$\mathtt{Card}(\mathbf M_i) = 995$}. For this example, if $D_i$ is unknown, we need to satisfy condition~\eqref{eq:main con}, which is a matrix of dimensions $1000\times 1000$, while, if $D_i$ is available, one can satisfy condition~\eqref{eq:main con 2} instead, reducing the dimensions to $5 \times 5$, which is more scalable as the connectivity degree no longer matters. It is worth noting that while the upper bound on $\|D_i\|_2$ would still suffice to calculate $\rho_i$, the matrix $D_i$ explicitly appears in condition~\eqref{eq:main con 2} through $\mathds Y_i$, necessitating it to be known. Despite the above discussion, we emphasize that Theorem~\ref{thm:main} remains broadly applicable and can be employed for several well-known network topologies, including, but not limited to, cascade and ring topologies.
\end{remark}

The data-driven construction of ISS Lyapunov functions for the individual subsystems, together with their corresponding input-to-state stabilizing controllers, either via Theorem~\ref{thm:main} under Assumptions~\ref{asmp:noise assumption}--\ref{asmp:main con} or Theorem~\ref{thm:main 2} under Assumptions~\ref{asmp:noise assumption} and~\ref{asmp:main con 2}, enables the synthesis of a CLF and a controller for the entire infinite network that guarantee UGAS. To this end, the next section presents a compositional approach based on small-gain reasoning, which systematically combines the ISS Lyapunov functions of the subsystems, purely designed based on data, into a CLF for the infinite network.

\section{Composed Control Lyapunov Function for Infinite Network}\label{sec:compose}

In this section, we derive a sum-type compositional condition under which the ISS Lyapunov functions obtained from data for individual subsystems yield a CLF for the infinite network defined in Definition~\ref{def:inf net}. To do so, we leverage the interconnection constraint~\eqref{eq:int constraint} and propose a small-gain condition in terms of a linear gain operator~\citep{kawan2020lyapunov}.

According to Theorem~\ref{thm:main} {under Assumptions~\ref{asmp:noise assumption}--\ref{asmp:main con}}, and Theorem~\ref{thm:main 2} {under Assumptions~\ref{asmp:noise assumption} and \ref{asmp:main con 2}}, each subsystem $\Sigma_i$ admits an input-to-state stabilizing controller $u_i$ and an ISS
Lyapunov function $\Vs,$ satisfying conditions~\eqref{eq:ISS-con1} and \eqref{eq:ISS-con2}, with $\underline{\alpha}_i, \overline{\alpha}_i,\kappa_i \in \R^+, \rho_i \in \R^+_0$. We define infinite matrices $\mathcal K\Let \Diag{i}{\kappa_i}$ and $\Theta \Let (\theta_{ij})_{i,j\in\N^+},$ with $\theta_{ij} = \frac{\rho_i}{\underline{\alpha}_j},$
where $\theta_{ij}=0, j\notin \mathbf M_i$ {(cf.~\eqref{eq:coefficients} in the proof of Theorem~\ref{thm:comp} for the role of these gains)}. We further define
\begin{align}\label{eq:gain operator}
	\Omega\Let \mathcal K^{-1}\Theta = (\varpi_{ij})_{i,j\in\N^+}, \quad \varpi_{ij} \Let \frac{\theta_{ij}}{\kappa_i}.
\end{align}
We assume there exist constants
${\underline\zeta,\overline\zeta},\underline\kappa\in\R^+$ and $\overline\rho\in\R^+$,
such that $\forall i\in\N^+$,
\begin{align}\label{eq:coeffs conditions}
	0<{\underline\zeta} \le \underline{\alpha}_i\leq\overline{\alpha}_i\le {\overline\zeta}<\infty,\quad 
	\underline\kappa \le \kappa_i,\quad \rho_i\le \overline\rho,
\end{align}
and that the matrix $\Theta$ satisfies
\begin{align}
	\|\Theta\|_{1,1} := \sup_{j\in\N^+}\sum_{i\in\N^+}\theta_{ij} < \infty,
\end{align}
which is equivalent to $\Theta\!:\ell^1\to\ell^1$ being bounded. Here, the double index indicates that the operator norm is induced by the $\ell^1$-norm on both the domain and the co-domain of the operator $\Theta$. Then, $\Omega,$ as defined in~\eqref{eq:gain operator}, serves as the gain operator $\Omega\!:\ell^1\to \ell^1$ by
\begin{align}\label{eq:gain operator 1}
	(\Omega x)_i = \sum_{j\in\N^+}\varpi_{ij}x_j, \quad \forall i\in\N^+\!,
\end{align}
which is a bounded linear operator on $\ell^1$.
We further assume that the spectral radius of the gain operator denoted by $r(\Omega)$ satisfies
\begin{align}\label{eq:con spectral}
	r(\Omega)< 1.
\end{align}

Under the above-mentioned assumptions, we present the following theorem, which provides the conditions required to compose the data-driven ISS Lyapunov functions of the subsystems into a CLF for the infinite network~\citep{kawan2020lyapunov}.

\begin{theorem}\label{thm:comp}
	Consider an infinite network $\Siginf$ as described in Definition~\ref{def:inf net}. Given an input-to-state stabilizing controller $u_i$ and an
	ISS Lyapunov function $\mathcal V_i$  for all $i\in\N^+,$ derived from data using Theorem~\ref{thm:main} {under Assumptions~\ref{asmp:noise assumption}--\ref{asmp:main con}}, or Theorem~\ref{thm:main 2} {under Assumptions~\ref{asmp:noise assumption} and \ref{asmp:main con 2}},
	let conditions~\eqref{eq:coeffs conditions}--\eqref{eq:con spectral} hold. Then for every $\varepsilon>0,$ there exists a vector
	$\mu=(\mu_i)_{i\in\N^+}\in\ell^\infty,$ where
	\begin{align*}
		&\underline\mu \le \mu_i \le \overline\mu,\quad \forall i\in\N^+,\\
		\text{with }&\underline{\mu} \Let \inf_i(\mu_i), \quad \overline{\mu}\Let \sup_i(\mu_i),
	\end{align*}
	and a constant $\kappa_\infty>0,$ where
	\begin{align*}
		\kappa_\infty \ge (1-r(\Omega))\,\underline\kappa-\varepsilon,
	\end{align*}
	with $\underline\kappa\Let \inf_i(\kappa_i)$. Moreover,
	\begin{align}\label{eq:V composed}
		\V := \sum_{i\in\N^+}\mu_i\Vs,\qquad x=(x_i)_{i\in\N^+}\in X,
	\end{align}
	is a CLF for the infinite network in the sense of Theorem~\ref{thm:CLF} under the decentralized controller
	$u:=(u_i)_{i\in\N^+}\in U$, with
	\begin{align}\label{eq:coeff CLF}
		\underline\alpha:=\underline\mu\,{\underline\zeta},\qquad
		\overline\alpha:=\overline\mu\,{\overline\zeta},\qquad
		\kappa:=\kappa_\infty,
	\end{align}
	where ${\underline\zeta}\Let \inf_i(\underline{\alpha}_i)$,  ${\overline\zeta}\Let \sup_i(\overline{\alpha}_i)$, $\overline{\rho}\Let \sup_i(\rho_i)$, thereby rendering the infinite network $\Sigma_K$ UGAS.
\end{theorem}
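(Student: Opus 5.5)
The proof reduces to Theorem~\ref{thm:CLF}: we build $\mu$ from a spectral-radius argument for the gain operator $\Omega$, and then check that the weighted sum $\V$ satisfies both Lyapunov conditions.

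\emph{Step 1: constructing $\mu$.} Fix $\varepsilon>0$. The aim is a strictly positive $\mu\in\ell^\infty$ with $\inf_i\mu_i>0$ such that
\begin{align*}
\mu^\top\Theta\le(r(\Omega)+\tilde\varepsilon)\,\mu^\top\mathcal K
\end{align*}
holds componentwise, for a small $\tilde\varepsilon$ to be tuned. The estimate this is meant to produce is $\sum_i\mu_i\theta_{ij}\le(r(\Omega)+\tilde\varepsilon)\,\mu_j\kappa_j$ for every $j$; it lives in the dual pairing $\ell^\infty\times\ell^1$. For the construction I would use the positive operator $\Omega^{*}$, i.e., the $\ell^\infty$-adjoint of $\Omega$ acting through the transposed weights, and the Neumann series
\begin{align*}
\mu:=\sum_{k\ge0}\lambda^{-k}(\Omega^{*})^k\mathbf 1, \qquad \lambda:=r(\Omega)+\tilde\varepsilon.
\end{align*}
This series converges in $\ell^\infty$ because $r(\Omega^*)=r(\Omega)<\lambda$. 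It gives $\mu\ge\mathbf 1$ and $\Omega^*\mu\le\lambda\mu$, hence $\underline\mu\ge1$ and $\overline\mu<\infty$.

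There is a weighting subtlety here, since $\Omega=\mathcal K^{-1}\Theta$. I would handle it by working with $\mathcal K\mu$ and using the bounds $\underline\kappa\le\kappa_i$ and $\sup_i\kappa_i<\infty$. I expect the latter to follow from the data-driven construction. If it does not, I would assume it additionally, or rescale, following \citep{kawan2020lyapunov}. This is the step I expect to be the main obstacle: one needs the eigenvector-like inequality with a uniformly positive $\mu$ in the correct norm, and I would lean on the cited construction in \citep{kawan2020lyapunov} here.

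\emph{Step 2: sandwich bound~\eqref{eq:con1-network}.} With $\underline\mu\le\mu_i\le\overline\mu$, sum~\eqref{eq:ISS-con1} over $i$. Using~\eqref{eq:coeffs conditions}, this gives $\underline\mu\,\underline\zeta|x|_2^2\le\V\le\overline\mu\,\overline\zeta|x|_2^2$, so $\V$ is finite and continuous on $X$. Before moving on, I would check that the decentralized feedback $u=(u_i)$ lies in $\mathcal U$. This uses the uniform bounds on the polynomial controllers together with Lemma~\ref{lem:lip}.

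\emph{Step 3: decay~\eqref{eq:con2-network}.} Summing~\eqref{eq:ISS-con2} with $w_{ij}=x_j$ yields
\begin{align*}
\Vp\le\sum_i\mu_i\Big(-\kappa_i\Vs+\rho_i\sum_{j\in\mathbf M_i}|x_j|_2^2\Big).
\end{align*}
Differentiating the series termwise needs justification. I would obtain it by uniform convergence of the partial sums along solutions, using the bounds in~\eqref{eq:coeffs conditions} and the $\ell^2$-summability of $\mathscr F$.

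Next bound $|x_j|_2^2\le\mathcal V_j/\underline\alpha_j$; this is where the gains $\theta_{ij}$ appear. Interchanging the sums (all terms are nonnegative) gives
\begin{align*}
\Vp\le\sum_j\Big(-\mu_j\kappa_j+\sum_i\mu_i\theta_{ij}\Big)\mathcal V_j\le-\sum_j(1-\lambda)\,\mu_j\kappa_j\mathcal V_j.
\end{align*}
Since $\kappa_j\ge\underline\kappa$, this is at most $-(1-r(\Omega)-\tilde\varepsilon)\,\underline\kappa\,\V$. Choosing $\tilde\varepsilon$ so that $\tilde\varepsilon\,\underline\kappa\le\varepsilon$ gives $\kappa_\infty\ge(1-r(\Omega))\underline\kappa-\varepsilon$, with $\kappa_\infty>0$ for small $\varepsilon$. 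Theorem~\ref{thm:CLF} then yields UGAS of $\Sigma_K$.
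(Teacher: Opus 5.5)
Your skeleton matches the paper's proof. The sandwich bound comes from weighting \eqref{eq:ISS-con1} by $\mu_i\in[\underline\mu,\overline\mu]$. The decay comes from summing \eqref{eq:ISS-con2}, substituting $w_{ij}=x_j$, using $|x_j|_2^2\le\mathcal V_j(x_j)/\underline\alpha_j$, regrouping by $j$, and applying $\sum_i\mu_i\theta_{ij}\le(\kappa_j-\kappa_\infty)\mu_j$.

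The real difference is in how $\mu$ is obtained. The paper imports this last inequality, together with a bounded and uniformly positive $\mu$, as a black box from \citet{kawan2020lyapunov}. You instead construct $\mu$ by a Neumann series, which makes the argument self-contained and shows where $\kappa_\infty=(1-r(\Omega)-\tilde\varepsilon)\,\underline\kappa$ comes from. As you set it up, though, the series in $\Omega^*$ produces $\mathcal K\mu$, so the resulting $\mu$ only satisfies $\mu_i\ge 1/\kappa_i$, and you need $\sup_i\kappa_i<\infty$. That is not a hypothesis of the theorem. It also does not follow from Theorems~\ref{thm:main} or~\ref{thm:main 2}, where $\kappa_i$ is a free design constant.

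You do not need that bound if you run the series with the adjoint of $\Theta\mathcal K^{-1}$ instead. Define $(C\nu)_j:=\kappa_j^{-1}\sum_i\theta_{ij}\nu_i$ on $\ell^\infty$ and set $\mu:=\sum_{k\ge0}\lambda^{-k}C^k\mathbf 1$. Because $\mathcal K^{-1}$ and $\Theta$ are both bounded on $\ell^1$, you have $r(\Theta\mathcal K^{-1})=r(\mathcal K^{-1}\Theta)=r(\Omega)<\lambda$. Hence the series converges, $\mu\ge\mathbf 1$, and $C\mu=\lambda(\mu-\mathbf 1)\le\lambda\mu$. This is exactly $\sum_i\mu_i\theta_{ij}\le\lambda\kappa_j\mu_j$, with no upper bound on $\kappa_i$ required. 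Also take $\tilde\varepsilon<\min\{\varepsilon/\underline\kappa,\,1-r(\Omega)\}$, so that $\kappa_\infty>0$ for every $\varepsilon>0$, not only small ones.

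Finally, the ``uniform bounds on the polynomial controllers'' you cite to put $u$ in $\mathcal U$ are not provided by the hypotheses. Nothing bounds $\mathbf K_i$ uniformly in $i$, and Lemma~\ref{lem:lip} presupposes an admissible feedback rather than establishing one. The paper does not check this either, so it is an implicit assumption rather than something you can derive.
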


\textbf{Proof.}
	We begin by showing that $\V$ as in~\eqref{eq:V composed} satisfies condition~\eqref{eq:con1-network}. From condition~\eqref{eq:ISS-con1}, we have
	\begin{align*}
		|x|_2^2 =\!\! \sum_{i\in\N^+}\!|x_i|_2^2\leq \!\!\sum_{i\in\N^+}\!\frac{1}{\underline\alpha_i}\!\Vs\!\leq\! \eta \!\!\sum_{i\in\N^+} \!\!\Vs\leq \eta \frac{1}{\underline\mu}\V,
	\end{align*}
	where $\eta = \sup_i(\frac{1}{\underline\alpha_i})$. Therefore, we have
	\begin{align*}
		\underline\mu\frac{1}{\eta}|x|_2^2 = \underline\mu\,{\underline\zeta}|x|_2^2=\underline\alpha|x|_2^2\leq\V,
	\end{align*}
	when we choose ${\underline\zeta}=\frac{1}{\eta}=\inf_i(\underline\alpha_i)$ and $\underline\alpha = \underline\mu\,{\underline\zeta}$. Moreover, according to~\eqref{eq:V composed} and the right-hand side of condition~\eqref{eq:ISS-con1}, we have
	\begin{align*}
		\V &= \sum_{i\in\N^+}\mu_i\Vs\leq\overline{\mu}\sum_{i\in\N^+}\Vs\leq\overline{\mu}\sum_{i\in\N^+}\overline{\alpha}_i|x_i|_2^2\\
		&\leq \overline{\mu}\,{\overline\zeta}\sum_{i\in\N^+}|x_i|_2^2 = \overline\alpha |x|_2^2,
	\end{align*}
	with ${\overline\zeta} = \sup_i(\overline{\alpha}_i)$ and $\overline\alpha = \overline{\mu}\,{\overline\zeta}$, resulting in the satisfaction of condition~\eqref{eq:con1-network}.
	
	\begin{algorithm}[t!]
		\caption{Data-driven design of CLF and {UGA-stabilizing} controller}\label{Alg:1}
		\begin{center}
			\begin{algorithmic}[1]
				\REQUIRE 
				Dictionaries $\mathcal F_i(x_i)$ and $\mathcal G_i(x_i)$, noise bound $\Lambda_i\Lambda_i^\top $
				\STATE Apply arbitrary admissible inputs $u_i$ to each subsystem initialized at $x_i(t_0)$
				\STATE Execute the infinite network over $[t_0, t_0 + (T-1)\tau]$
				\STATE Gather data $\mathbf U_i$, $\mathbf W_i$, $\mathbf X_i$, and $\mathbf X_i^{\mathsf d}$ according to \eqref{eq:data}
				\FOR{$i\in\N^+$}\label{line 1}
				\STATE Extract data $\mathbf U_i$, $\mathbf W_i$, $\mathbf X_i$, and $\mathbf X_i^{\mathsf d}$
				\STATE
				Construct $\mathbf J_i$ and $\mathbf G_i$ as in~\eqref{eq:mon traj}
				\STATE Design $\mathbf K_i(x_i)$ and $\Phi_i \Let P_i^{-1}$ using \textsf{SOSTOOLS}, for fixed {$\kappa_i,\vartheta_i\in\R^+$}
				\IF{$D_i$ is unknown (with given $\varkappa_i$)}
				\STATE  by satisfying condition~\eqref{eq:main con}
				\ELSE
				\STATE by satisfying condition~\eqref{eq:main con 2}
				\ENDIF
				\STATE
				Construct ISS Lyapunov function $\Vs = x_i^\top \Phi_i^{-1}x_i = x_i^\top P_ix_i$ and controller $u_i = \mathbf K_i(x_i)\Phi_i^{-1}x_i = \mathbf K_i(x_i)P_ix_i $
				\STATE Compute $ \rho_i =  \frac{\Vert\sqrt{P_i}\Vert_2^2\varkappa_i^2}{\vartheta_i}$ (for unknown $D_i$) or $ \rho_i =  \frac{\Vert\sqrt{P_i}\Vert_2^2\Vert D_i \Vert_2^2}{\vartheta_i}$ (for known $D_i$)
				\ENDFOR\label{line 12}
				{\IF{conditions~\eqref{eq:coeffs conditions}--\eqref{eq:con spectral} are met}
					\STATE
					$\V \Let \sum_{i\in\N^+} \!\mu_i\Vs$ is a CLF for the infinite network $\Sigma$, and controller $u = (u_i)_{i\in\N^+}$ renders the network $\Sigma_K$ UGAS
					\ELSE
					\STATE
					Repeat Steps \ref{line 1}--\ref{line 12} with more collected samples $T$ or different parameters ${\kappa_i},\vartheta_i$
					\ENDIF}
				\ENSURE CLF $\V= \sum_{i\in\N^+} \mu_ix_i^\top P_ix_i$, controller $u = (u_i)_{i\in\N^+}$ with $u_i \!=\! \mathbf K_i(x_i)P_ix_i,$ satisfaction of UGAS property across the infinite network $\Sigma_K$
			\end{algorithmic}
		\end{center}
	\end{algorithm}
	
	\noindent We proceed with demonstrating the satisfaction of condition~\eqref{eq:con2-network}, utilizing the definition of the composed CLF, as
	{\begin{align*}
			\Vp &= \sum_{i\in\N^+}\!\mu_i\dot{\mathcal V}_i(x_i),
		\end{align*}
		with
		\begin{align*}
			\dot{\mathcal V}_i(x_i)  &\coloneq \limsup_{s \to 0} \tfrac{1}{s}\big(\mathcal V_i(\phi_{u_i}(s,x_i,w_i)) - \mathcal V_i(x_i)\big),
		\end{align*}
		which is similar to~\eqref{eq:dini V} but is defined for each subsystem endowed with solution $\phi_{u_i}(s,x_i,w_i)$. Using~\eqref{eq:ISS-con2} as a computationally simpler style of this formulation, one can obtain}
	\begin{align}
		{\Vp}&\overset{\eqref{eq:ISS-con2}}{\leq} \!\sum_{i\in\N^+}\!\mu_i(-\kappa_i\Vs + \rho_i|w_i|_2^2)\notag\\
		& \overset{\eqref{eq:w partition}}{=}\!\sum_{i\in\N^+}\!\mu_i(-\kappa_i\Vs + \!\sum_{j\in\mathbf M_i}\!\rho_i|w_{ij}|_2^2)\notag\\
		& \overset{\eqref{eq:int constraint}}{=} \!\sum_{i\in\N^+}\!\mu_i(-\kappa_i\Vs + \!\sum_{j\in\mathbf M_i}\!\rho_i|x_j|_2^2)\notag\\
		&\overset{\eqref{eq:ISS-con1}}{\leq} \!\!\sum_{i\in\N^+}\!\mu_i(-\kappa_i\Vs + \!\sum_{j\in\mathbf M_i}\!\frac{\rho_i}{\underline\alpha_j}\mathcal V_j(x_j))\notag\\
		&\overset{\hphantom{\eqref{eq:ISS-con1}}}{=}\!\!\sum_{i\in\N^+}\!-\kappa_i\mu_i\Vs + \sum_{i\in\N^+}\!\sum_{j\in\mathbf M_i}\mu_i\theta_{ij}\mathcal V_j(x_j).\label{eq:coefficients}
	\end{align}
	{Since $\theta_{ij} = 0, j\notin\mathbf M_i$ and $i\notin\mathbf M_i$ according to Definition~\ref{def:ct-NPS}, {$\theta_{ii} = 0$}. Then, we have
		\begin{align*}
			{\Vp}&\leq\sum_{i\in\N^+}\!-\kappa_i\mu_i\Vs + \sum_{i\in\N^+}\!\sum_{j\in\N^+}\mu_i\theta_{ij}\mathcal V_j(x_j).
		\end{align*}
		Now, by changing the index of the first summation to $j$, without loss of generality, and re-grouping by $j$, one has}
	\begin{align*}
		{\Vp}&\leq\sum_{j\in\N^+}\!\!\big(-\kappa_j\mu_j +\! \sum_{i\in\N^+}\!\mu_i\theta_{ij}\big)\mathcal V_j(x_j).
	\end{align*}
	{Satisfying conditions~\eqref{eq:coeffs conditions}--\eqref{eq:con spectral}, according to~\citet{kawan2020lyapunov}, yields
		$$
		\sum_{i\in\N^+}\mu_i\theta_{ij}\le (\kappa_j-\kappa_\infty)\mu_j,\qquad \forall j\in\N^+
		$$
		being satisfied. Then, one can obtain}
	\begin{align*}
		{\Vp}&\leq\!\!\!\sum_{j\in\N^+}\! \!\!-\kappa_\infty\mu_j\mathcal V_j(x_j)\!=\!-\kappa_\infty\!\!\!\sum_{j\in\N^+}\! \!\mu_j\mathcal V_j(x_j)=-\kappa\V,
	\end{align*}
	with $\kappa\Let\kappa_\infty$, which concludes the proof.$\hfill\blacksquare$

\begin{table*}[h!]
	\centering
	\caption{An overview of the data-driven results for infinite networks, reporting the per-subsystem runtime in seconds (\texttt{RT}) and memory usage (\texttt{MU}) in megabytes. Here, {$\mathtt{Card}(\mathbf M_i)$}
		denotes the number of subsystems interconnected with the $i$-th subsystem, and $T$ denotes the number of collected samples. For each system, results for the network parameters and computation costs are reported in two rows: the first row is for the case where $D_i$ is unknown, while the second row is for the case where $D_i$ is known.\label{tab:1}}\vspace{0.2cm}
	\resizebox{\textwidth}{!}{\begin{tabular}{@{}cccccccc@{}}
			\toprule
			\multirow{2}{*}[-0.25em]{System} & \multirow{2}{*}[-0.25em]{Dictionary $\mathcal F_i(x_i)$} & \multirow{2}{*}[-0.25em]{Dictionary $\mathcal G_i(x_i)$} & \multicolumn{2}{c}{Network parameters} & \multicolumn{2}{c}{Computation costs}\\
			\cmidrule(lr){4-5} \cmidrule(lr){6-7}
			{} & {} & {} & {$\mathtt{Card}(\mathbf M_i)$} & $T$  & \texttt{RT} (sec) & \texttt{MU} (MB)\\
			\midrule
			\myalign{l}{\multirow{2}{*}{Spacecraft}} & \multirow{2}{*}{$[x_{1_i}~ x_{2_i}~x_{3_i}~x_{1_i}x_{2_i}~x_{2_i}x_{3_i}~x_{1_i}x_{3_i}]^{\!\top}$} & \multirow{2}{*}{$\mathds I_3$} & {$1$} &{$70$}  & {$\approx 1$} & {$0.56$} \\
			{} & {} & {} & {$1800$} & {$50$} & {$15.53$} & {$2.84$}\\
			\midrule
			\myalign{l}{\multirow{2}{*}{Lorenz}}	& \multirow{2}{*}{$[x_{1_i}~ x_{2_i}~x_{3_i}~x_{1_i}x_{3_i}~x_{1_i}x_{2_i}~x_{2_i}x_{3_i}]^{\!\top}$}	&	\multirow{2}{*}{$\mathds I_3$}	&	{$1$}	&	{$25$}	&		{$1.21$} & {$0.56$}	\\
			{} & {} & {} &{$1000$}	&		{$80$}	&	{$2.83$} & {$2.47$}	\\
			\midrule
			\myalign{l}{\multirow{2}{*}{Academic}} & \multirow{2}{*}{$[x_{1_i}~ x_{2_i}~ x_{1_i}x_{2_i}~ x_{1_i}^2~ x_{2_i}^2]^{\!\top}$}  &  \multirow{2}{*}{$[1~x_{1_i}~x_{2_i}]^{\!\top}$}   &{$5$}	&{$21$}		&	{$1.21$}	& {$0.53$}\\
			{} & {} & {} &{$500$} & {$50$} & {$<1$} & {$0.61$}\\
			\bottomrule
	\end{tabular}}
\end{table*}

Algorithm~\ref{Alg:1} outlines a concise summary of the proposed compositional data-driven method for the construction of a CLF together with a {UGA-stabilizing} controller for an unknown infinite network. The approach builds on data-driven ISS Lyapunov functions associated with the individual subsystems to enable UGAS certification of the overall interconnection.

\section{Simulation Results}\label{sec:simulation}
In this section, we apply our data-driven approach to three different infinite networks to demonstrate its applicability. The case studies include two physical examples, spacecraft and a Lorenz-type chaotic system, and an academic example. For each case, we first assume that all subsystem matrices $A_i$, $B_i$, and $D_i$, and consequently $f(x)$ and $g(x)$ of the infinite network, are unknown, and we assess the results of Theorem~\ref{thm:main}. We then consider networks with denser connectivity by assuming $D_i$ is known and evaluate the performance of Theorem~\ref{thm:main 2}. Table~\ref{tab:1} provides a concise yet comprehensive overview of the results, including a comparison across different levels of interconnection sparsity for scenarios where $D_i$ is unknown and where $D_i$ is available.

Across all case studies, the main objective is to synthesize a CLF together with a {UGA-stabilizing} controller for an infinite network with an unknown mathematical model. To this end, following the required steps outlined in Algorithm~\ref{Alg:1}, input–state trajectory data are first collected from each subsystem in accordance with~\eqref{eq:data}. These data are then used to enforce condition~\eqref{eq:main con} when $D_i$ is unknown, or condition~\eqref{eq:main con 2} otherwise, and thereby construct ISS Lyapunov functions and corresponding local controllers for all subsystems, purely based on data. Subsequently, leveraging the compositionality results established in Theorem~\ref{thm:comp}, a network-level CLF and its associated controller are synthesized in a compositional manner from the data-driven ISS Lyapunov functions, thereby guaranteeing UGAS of the whole infinite network. All simulations are conducted in \MATLAB \textsl{R2023b} on a MacBook Pro equipped with an Apple~M2~Pro processor and 16~GB of memory. The following subsections detail each case study and present the corresponding numerical results.

\subsection{Spacecraft System}\label{subsec:sc}
As the first case study, we consider an infinite network of $3$-dimensional rotating rigid spacecraft subsystems~\citep{khalil2002control}, {with $x_i = [x_{1_i}~x_{2_i}~x_{3_i}]^\top$, where the state variables are the angular velocities along the principal axes.
	Each spacecraft evolves
	according to~\eqref{eq:subsystem reform}, where the control input is $u_i = [u_{1_i}~u_{2_i}~ u_{3_i}]^\top$\!, representing the applied torque, $A_i^* = \operatorname{diag}\big(\frac{(J_{2_i} - J_{3_i})}{J_{1_i}},\frac{(J_{3_i} - J_{1_i})}{J_{2_i}},\frac{(J_{1_i} - J_{2_i})}{J_{3_i}}\big)$, and $B_i^* = \operatorname{diag}\big(\frac{1}{J_{1_i}} ,$ $\frac{1}{J_{2_i}},\frac{1}{J_{3_i}}\big)$, with $J_{1_i}$, $J_{2_i}$, and $J_{3_i}$ denoting the principal moments of inertia of the $i$-th spacecraft. The coupling terms capture the influence of the interconnected subsystems on each subsystem through $D_i w_i$, with $w_i$ satisfying $w_i=(w_{ij})_{j\in\mathbf M_i}=(x_j)_{j\in\mathbf M_i}$ as per \eqref{eq:int constraint}. Hence, considering $D_i$ as defined in~\eqref{eq:D partition} with
	$$
	D_{ij} = -10^{-4}\times\begin{bmatrix}
		0 & 0 & 1\\
		1 & 0 & 0\\
		0 & 1 & 0
	\end{bmatrix}\!\!,
	$$
	we have
	$$
	D_i w_i = -10^{-4}\times\begin{bmatrix}
		\sum_{j\in\mathbf M_i}x_{3_j}\\[2mm]
		\sum_{j\in\mathbf M_i}x_{1_j}\\[2mm]
		\sum_{j\in\mathbf M_i}x_{2_j}
	\end{bmatrix}\!\!.
	$$
	In addition, $\mathcal F_i^*(x_i) = [x_{2_i}x_{3_i}~x_{1_i}x_{3_i}~x_{1_i}x_{2_i}]^\top$ and $\mathcal G_i^*(x_i) = \mathds I_3$, which, together with other matrices describing the dynamics, are all \emph{unknown}. Nevertheless, we have}
{$\mathcal F_i(x_i) = [x_{1_i}~x_{2_i}~x_{3_i}~x_{1_i}x_{2_i}~x_{2_i}x_{3_i}~x_{1_i}x_{3_i}]^\top,$} based on which we can choose the transformation matrix as {$\Psi_i(x_i) = [\mathds I_3~\operatorname{diag}(x_{2_i},x_{3_i},x_{1_i})]^\top$}, and $\mathcal G_i(x_i) = \mathds I_3$.  Furthermore, we consider the elements of $\Delta_i$ take values in the interval {$[-0.01, 0.01]$}.
\begin{figure*}[t!]
	\centering
	\subfloat[Evolution of open-loop subsystems\label{fig:sc ol}]{
		\includegraphics[width=0.33\linewidth]{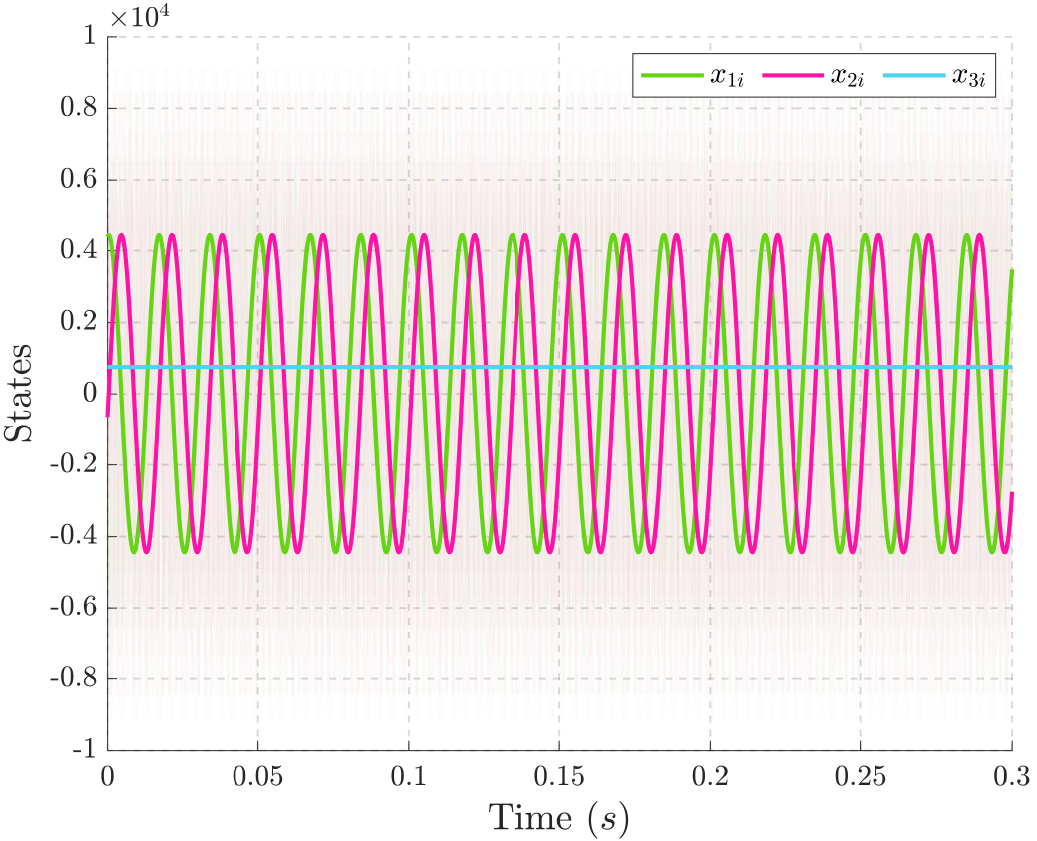}}
	\subfloat[\centering Evolution of closed-loop subsystems with unknown $D_i$\label{fig:sc unknown}]{
		\includegraphics[width=0.33\linewidth]{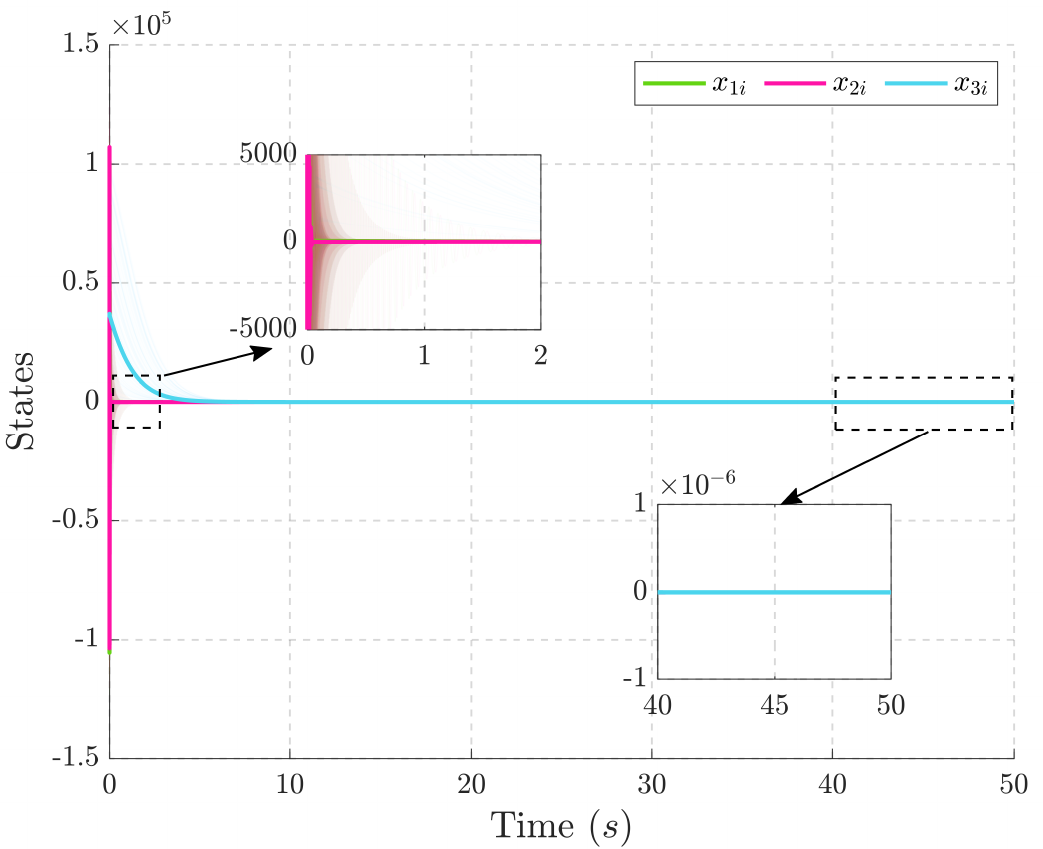}}
	\subfloat[\centering Evolution of closed-loop subsystems with known $D_i$\label{fig:sc known}]{
		\includegraphics[width=0.33\linewidth]{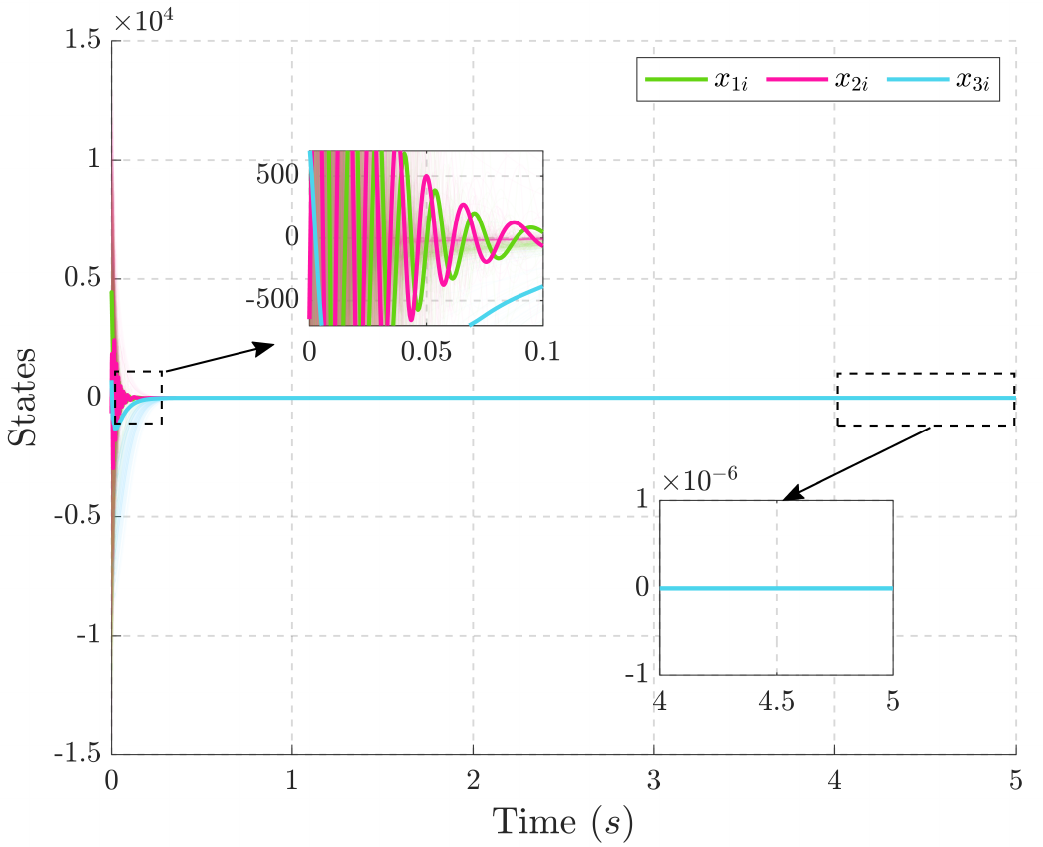}}
	\caption{\textbf{Spacecraft system:}
		Evolution of the subsystems under the synthesized controller~\eqref{eq:u SC unknown} (for unknown $D_i$) or \eqref{eq:u SC known} (for known $D_i$), with the states of a representative subsystem highlighted in bold and the remaining subsystems shown in a faded style. Fig.~(b) corresponds to the case where $D_i$ is unknown, Fig.~(c) considers the case where $D_i$ is known, while Fig.~(a) exhibits subsystems' open-loop behavior.\label{fig:sc}}
\end{figure*}

\textbf{Known Upper Bound on $\pmb{\|D_i\|_2}$.} 
Although $D_i$ is assumed to be unknown, {$\|D_i\|_2\leq 0.05$} is given. The states of each subsystem are affected by $j\in\mathbf M_i =\{i-1\},$ for all $i\geq2$, where $\mathtt{Card}(\mathbf M_i)= 1$, forming a cascade or line topology.
We collect {$T = 70$} samples with a sampling period of {$\tau = 0.1$}, and according to Remark~\ref{rem:noise}, we have {$\Lambda_i\Lambda_i^\top = 0.021\mathds I_3$}. We set {$\kappa_i = 0.1$ and $\vartheta_i = 1$}, and given the samples and dictionaries for the $i$-th subsystem, we follow the steps outlined in Algorithm~\ref{Alg:1}, and obtain
\begin{align}
	P_i =&~ 10^6\times \begin{bmatrix}
		1.5232  &  0.1830 &  -0.2349\\
		\star  &  1.0210  &  0.0435\\
		\star &  \star  &  1.9255\\
	\end{bmatrix}\!\!,\notag\\
	\Vs =& ~1523155.6413\,x_{1_i}^2 + 366098.4124\,x_{1_i}x_{2_i}\notag\\
	&- 469884.9578\,x_{1_i}x_{3_i} + 1020988.6223x_{2_i}^2 \notag\\
	&+ 87094.6687\,x_{2_i}x_{3_i} + 1925500.0465\,x_{3_i}^2, \notag\\
	u_{1_i} =& -5.2653\times 10^{-10}x_{1_i}^2 - 0.0097219\,x_{1_i}x_{2_i} \notag\\
	&- 1.7263\,x_{1_i}x_{3_i} + 6.0786\times 10^{-5}x_{2_i}^2\notag \\
	&- 0.45358\,x_{2_i}x_{3_i} - 0.029396\,x_{3_i}^2 - 456.7931\,x_{1_i}\notag\\
	&- 76.8262\,x_{2_i} + 63.3713\,x_{3_i},\notag\\
	u_{2_i} = & -8.2599\times 10^{-11}x_{1_i}^2 + 0.016265\,x_{1_i}x_{2_i}\notag \\
	&- 0.045597\,x_{1_i}x_{3_i} + 4.3491\times 10^{-5}x_{2_i}^2 \notag\\
	&- 0.22552\,x_{2_i}x_{3_i} - 0.14856\,x_{3_i}^2 - 216.9743\,x_{1_i} \notag\\
	&- 271.3122\,x_{2_i} - 23.2363\,x_{3_i},\notag\\
	u_{3_i} = & -1.1071\times 10^{-10}x_{1_i}^2 - 0.0036229\,x_{1_i}x_{2_i}\notag\\
	&- 0.67897\,x_{1_i}x_{3_i} - 0.00012082\,x_{2_i}^2  \notag\\
	&+ 0.24018\,x_{2_i}x_{3_i}+ 0.0030564\,x_{3_i}^2 + 25.329\,x_{1_i}\notag\\
	& - 36.8423\,x_{2_i} - 287.6872\,x_{3_i}.\label{eq:u SC unknown}
\end{align}
Furthermore, we attain {$\underline{\alpha}_i = 9.4701\times 10^5,$ $\overline{\alpha}_i = 2.0351 \times 10^6,$ and $\rho_i = 5.0876\times 10^3$}. {Then, we need to verify whether condition~\eqref{eq:con spectral} holds with the gains obtained using data.} We construct the infinite matrix $\Omega=(\varpi_{ij})_{i,j\in\N^+}$ as in~\eqref{eq:gain operator}, which is in the following form:
\begin{align*}
	\Omega = 
	\begin{bmatrix}
		0 & 0 & 0 & 0 & 0  & \ldots \\
		0.0537 & 0 & 0 & 0 & 0  & \ldots \\
		0 & 0.0537 & 0 & 0 & 0 & \ldots \\
		0 & 0 & 0.0537 & 0 & 0 &  \ldots \\
		\vdots & \ddots & \ddots & \ddots & \ddots  & \ddots
	\end{bmatrix}\!\!.
\end{align*}
To verify the small-gain condition~\eqref{eq:con spectral}, recall that
\begin{align}\label{eq:SGC}
	r(\Omega)\leq \|\Omega\|_{1,1} = \sup_{j\in\N^+}\sum_{i\in\N^+}\varpi_{ij}.
\end{align}
Owing to the form of $\Omega$ and to the fact that all subsystems are identical,
we obtain $r(\Omega) \leq \|\Omega\|_{1,1} \leq 0.0537<1$, resulting in the satisfaction of condition~\eqref{eq:con spectral}. This enables us to construct $\V = \sum_{i\in\N^+}\mu_i\Vs$ as the CLF for the infinite network and $u = (u_i)_{i\in\N^+}$ as its {UGA-stabilizing} controller, ensuring that the infinite network is UGAS.

\textbf{Known Matrix $\pmb{D_i}$.} Assuming $D_i$ is known, we consider a dense infinite network with $\mathtt{Card}(\mathbf M_i) =1800$, where each subsystem is affected by $1800$ subsequent subsystems, {that is, $\mathbf M_i=\{i+1,\ldots,i+1800\}$.} We obtain {$T = 50$} samples with a sampling period of {$\tau = 0.1$}, resulting in {$\Lambda_i\Lambda_i^\top = 0.015\mathds I_3$ as per Remark~\ref{rem:noise}}. By setting {$\kappa_i = 0.1$} and {$\vartheta_i = 5.5$}, and following the steps proposed in Algorithm~\ref{Alg:1}, we design
\begin{align}
	P_i = &10^5\times\begin{bmatrix}
		2.8221  &  0.3032 &  -0.3600\\
		\star &  3.0950 &  -0.7946\\
		\star & \star &   6.0998\\
	\end{bmatrix}\!\!,\notag\\
	\Vs = &~282214.8765\,x_{1_i}^2 + 60631.9048\,x_{1_i}x_{2_i}\notag \\
	&- 72001.835\,x_{1_i}x_{3_i} + 309503.6109\,x_{2_i}^2 \notag\\
	&- 158910.8974\,x_{2_i}x_{3_i} + 609976.6158\,x_{3_i}^2, \notag\\
	u_{1_i} =& -0.44592\,x_{1_i}^2 - 2.7864\,x_{1_i}x_{2_i} + 7.4505\,x_{1_i}x_{3_i} \notag\\
	&- 14.11\,x_{2_i}^2 + 10.4307\,x_{2_i}x_{3_i} + 0.232\,x_{3_i}^2 \notag\\
	&- 4884.1696\,x_{1_i} - 130.7843\,x_{2_i} \!+\! 140.8832\,x_{3_i},\notag\\
	u_{2_i} = & -0.94461\,x_{1_i}^2 \!+\! 12.2549\,x_{1_i}x_{2_i} \!+\! 0.50344\,x_{1_i}x_{3_i} \notag\\
	&+ 1.0336\,x_{2_i}^2 - 0.30186\,x_{2_i}x_{3_i} - 2.5638\,x_{3_i}^2\notag\\
	&- 63.8954\,x_{1_i} - 4376.8145\,x_{2_i} + 4.4998\,x_{3_i},\notag\\
	u_{3_i} = & -4.0831\,x_{1_i}^2 - 5.4064\,x_{1_i}x_{2_i} + 1.8243\,x_{1_i}x_{3_i} \notag\\
	&- 1.195\,x_{2_i}^2 + 2.5113\,x_{2_i}x_{3_i} - 0.58727\,x_{3_i}^2 \notag\\
	&+ 39.996\,x_{1_i} - 37.1488\,x_{2_i} - 6081.5056\,x_{3_i}.\label{eq:u SC known}
\end{align}
Moreover, we compute {$\underline{\alpha}_i = 2.6191\times 10^5,$ $\overline{\alpha}_i = 6.3477\times 10^5,$ and $\rho_i = 2.0774,$ leading to the fulfillment of condition~\eqref{eq:con spectral} with {$r(\Omega) \leq \|\Omega\|_{1,1} \leq 0.1428<1$ using~\eqref{eq:SGC}}, where
	\begin{align*}
		\Omega = 10^{-5}\!\times\!
		\begin{bNiceMatrix}[first-row]
			&\Hdotsfor{3}[line-style=mybrace, shorten=0pt]^{\ensuremath{j\in\{2,\ldots,1801\}}}\\
			0 & 7.9318 & \ldots & 7.9318 & 0  & \ldots \\
			0 & 0 & 7.9318 & \ldots & 7.9318  & \ldots \\
			0 & 0 & 0 & 7.9318 & \ldots & \ldots \\
			0 & 0 & 0& 0 & 7.9318 &  \ldots \\
			\vdots & \ddots & \ddots & \ddots & \ddots  & \ddots
		\end{bNiceMatrix}\!.
	\end{align*}
	\begin{figure*}[t!]
		\centering
		\subfloat[Evolution of open-loop subsystems\label{fig:lor ol}]{
			\includegraphics[width=0.33\linewidth]{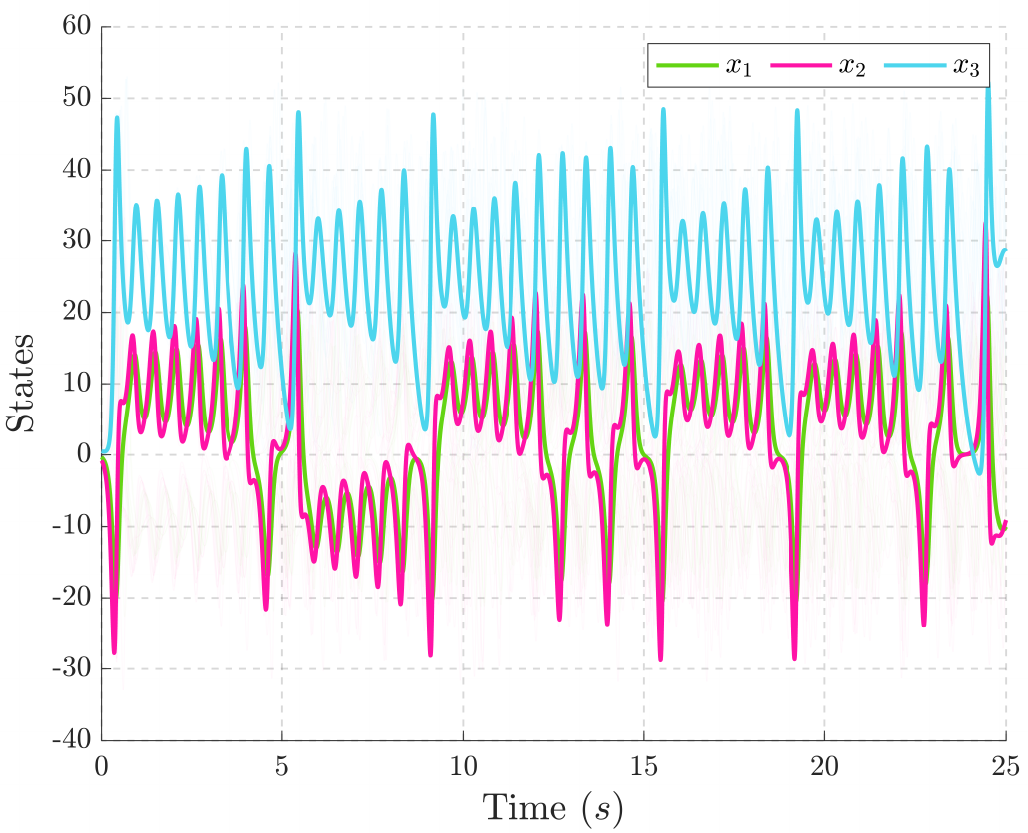}}
		\subfloat[\centering Evolution of closed-loop subsystems with unknown $D_i$\label{fig:lor unknown}]{
			\includegraphics[width=0.33\linewidth]{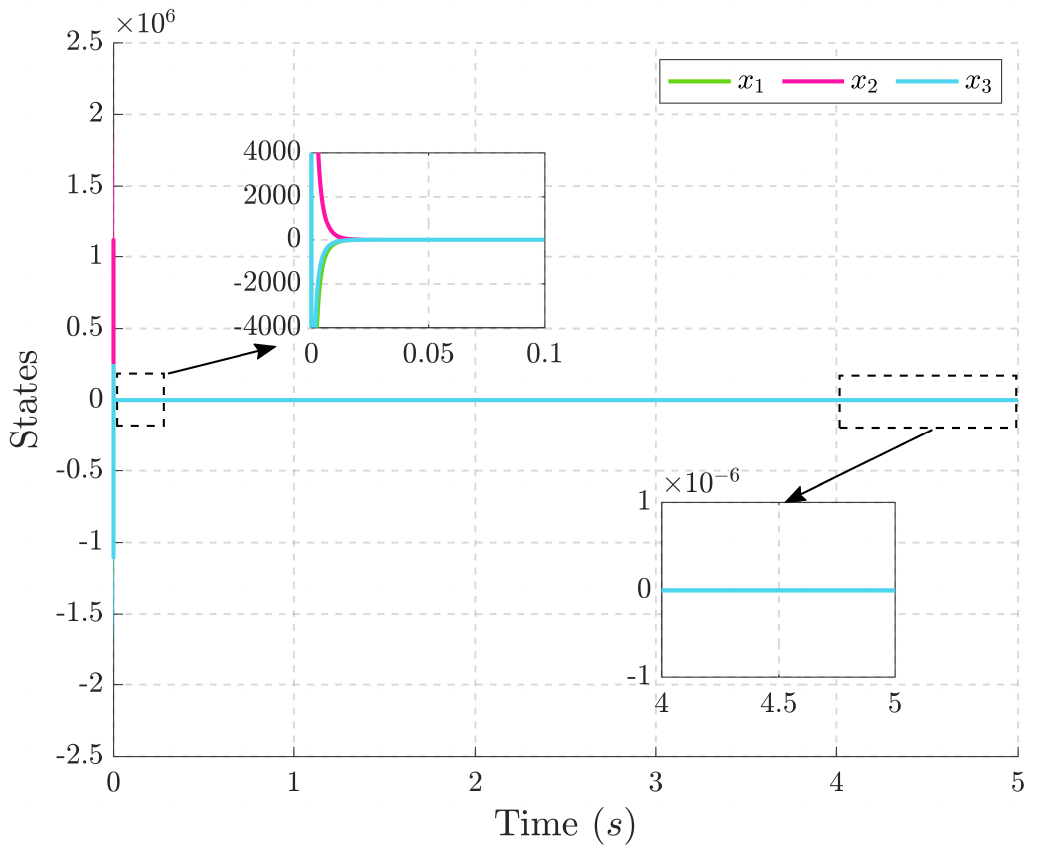}}
		\subfloat[\centering Evolution of closed-loop subsystems with known $D_i$\label{fig:lor known}]{
			\includegraphics[width=0.33\linewidth]{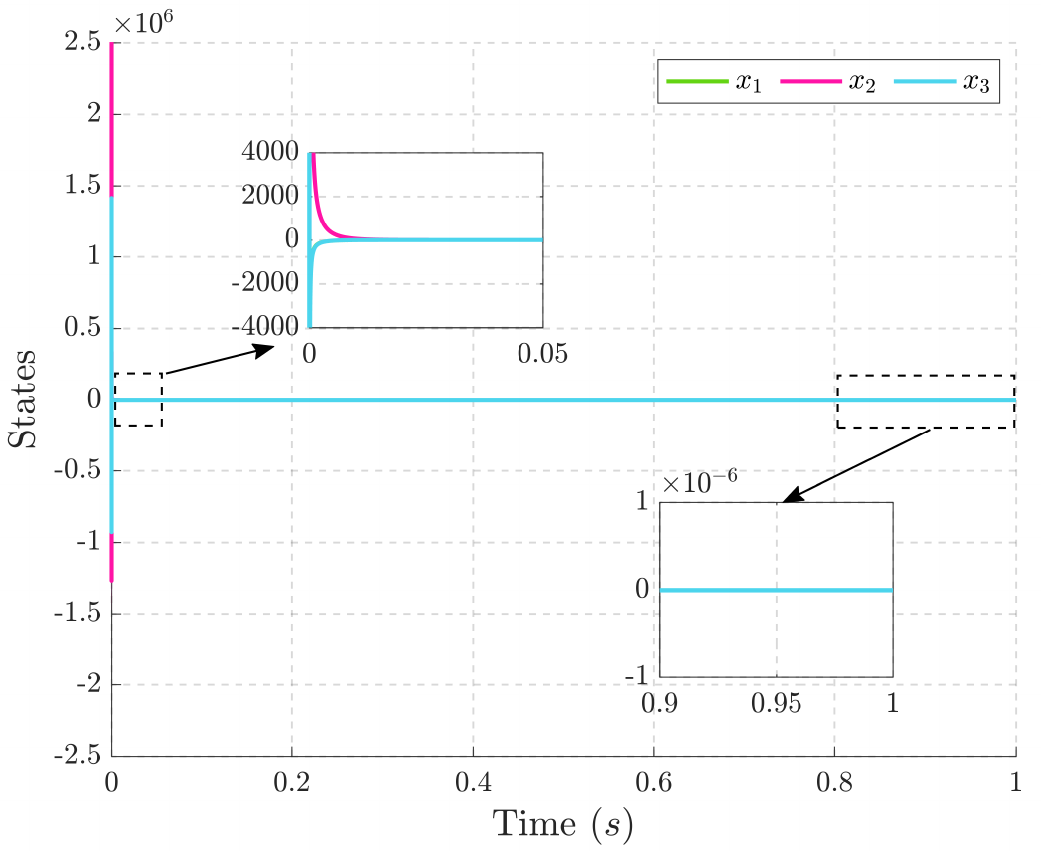}}
		\caption{\textbf{Lorenz system:}
			Illustration of the subsystems’ evolution under the synthesized controller~\eqref{eq:u Lorenz unknown} (for unknown $D_i$) or \eqref{eq:u Lorenz known} (for known $D_i$), with the states of a representative subsystem emphasized in bold and the remaining subsystems shown in a faded style. Fig.~(b) corresponds to the case of unknown $D_i$, Fig.~(c) considers the case where $D_i$ is known, while the open-loop behavior of the subsystems is depicted in Fig.~(a).\label{fig:lor}}
	\end{figure*}	
	Accordingly, we construct the CLF for the infinite network $\V = \sum_{i\in\N^+}\mu_i\Vs$  and its {UGA-stabilizing} controller $u = (u_i)_{i\in\N^+}$ , guaranteeing the infinite network is UGAS. The evolution of some arbitrary subsystems, starting from large initial conditions on the order of $10^4$, under the designed controllers~\eqref{eq:u SC unknown} and \eqref{eq:u SC known} is demonstrated in Fig.~\ref{fig:sc}, ensuring that their states converge to the origin asymptotically.}

\subsection{Lorenz System}
As the second case study, we examine an infinite network of Lorenz systems~\citep{strogatz2024nonlinear}, which are well-known for exhibiting chaotic dynamics, \emph{i.e.}, extreme sensitivity to variations in initial conditions. {Networks of Lorenz systems arise in a wide range of applications, including engineering design of robust systems that can tolerate and exploit chaotic behavior~\citep{sprott2010elegant}, secure communications through chaos-based encryption schemes~\citep{wang2009chaotic}, and neuroscience for modeling complex brain activity patterns~\citep{strogatz2024nonlinear}.}

{Each Lorenz subsystem is described by~\eqref{eq:subsystem reform}, with $x_i = [x_{1_i}~x_{2_i}~x_{3_i}]^\top$\!, $u_i = [u_{1_i}~u_{2_i}~u_{3_i}]^\top$\!,
	\begin{align*}
		A_i^* &=
		\begin{bmatrix}
			-10 & 10 & 0 & 0 & 0\\
			28 & -1 & 0 & -1 & 0\\
			0 & 0 & -\frac{8}{3} & 0 & 1
		\end{bmatrix}\!\!, \text{ and }
		B_i^* =
		\mathds I_3.
	\end{align*}
	The effect of other subsystems is captured by $D_iw_i$, with $w_i=(w_{ij})_{j\in\mathbf M_i}=(x_j)_{j\in\mathbf M_i}$ as in~\eqref{eq:int constraint} and $D_i=(D_{ij})_{j\in\mathbf M_i}$ as in~\eqref{eq:D partition}, where
	\begin{align*}
		D_{ij} &= 10^{-3}\times\begin{bmatrix}
			1 & 0 & 0\\
			0 & 0 & 0 \\
			0 & 0 & -1
		\end{bmatrix}\!\!,
	\end{align*}
	which yields
	\begin{align*}
		D_i w_i=10^{-3}\times
		\begin{bmatrix}
			\sum_{j\in\mathbf M_i}x_{1_j}\\[2mm]
			0\\[1mm]
			-\sum_{j\in\mathbf M_i}x_{3_j}
		\end{bmatrix}\!\!.
	\end{align*}
	We assume that these matrices, along with}
$\mathcal F^*_i(x_i) = [x_{1_i}~x_{2_i}~ x_{3_i}~ x_{1_i}x_{3_i}~ x_{1_i}x_{2_i}]^\top\!,$ and $\mathcal G_i^*(x_i) = \mathds I_3,$ are all unknown, while we have dictionaries $\mathcal F_i(x_i) = [x_{1_i}~ x_{2_i}~ x_{3_i}~ x_{1_i}x_{3_i}~ x_{1_i}x_{2_i}~ x_{2_i}x_{3_i}]^\top$ and $\mathcal G_i(x_i) = \mathds I_3$. We therefore consider the transformation matrix as $\Psi_i(x_i) = [\mathds I_3~ \operatorname{diag}(x_{3_i},x_{1_i},x_{2_i})]^\top$.
Furthermore, we assume that the elements of $\Delta_i$ take values in the interval {$[-0.001, 0.001]$}.

\begin{figure*}[t!]
	\centering
	\subfloat[Evolution of open-loop subsystems\label{fig:acad ol}]{
		\includegraphics[width=0.33\linewidth]{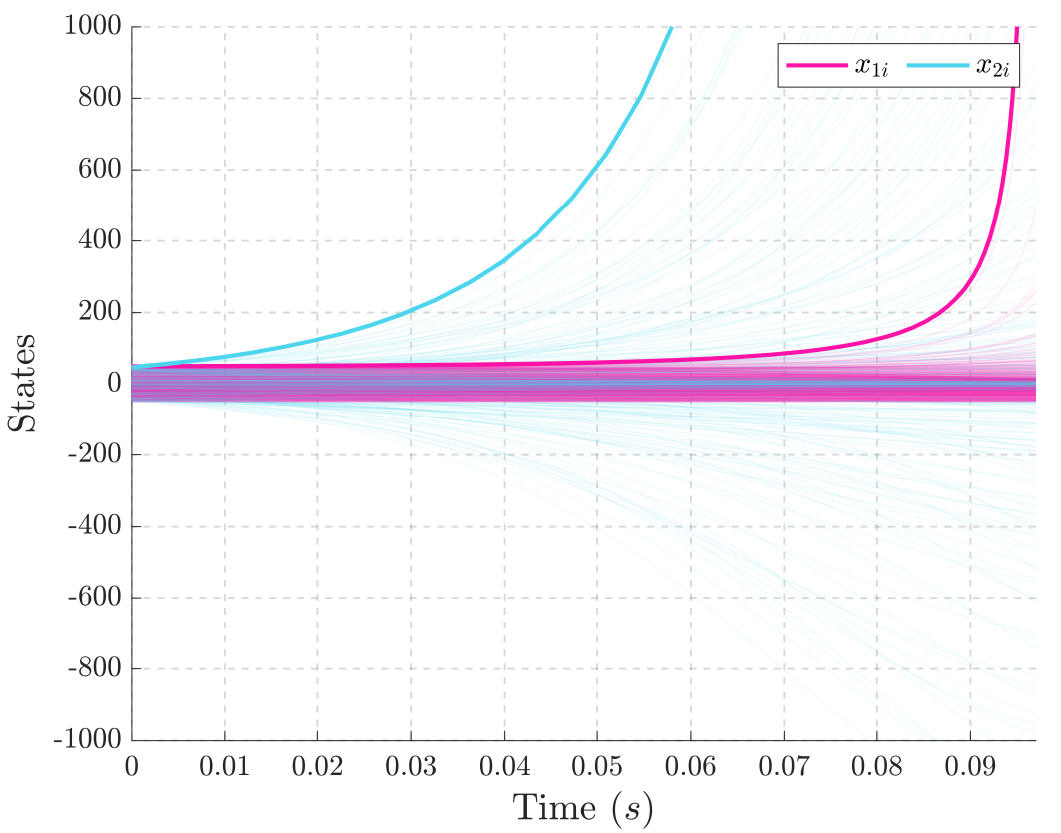}}
	\subfloat[\centering Evolution of closed-loop subsystems with unknown $D_i$\label{fig:acad unknown}]{
		\includegraphics[width=0.33\linewidth]{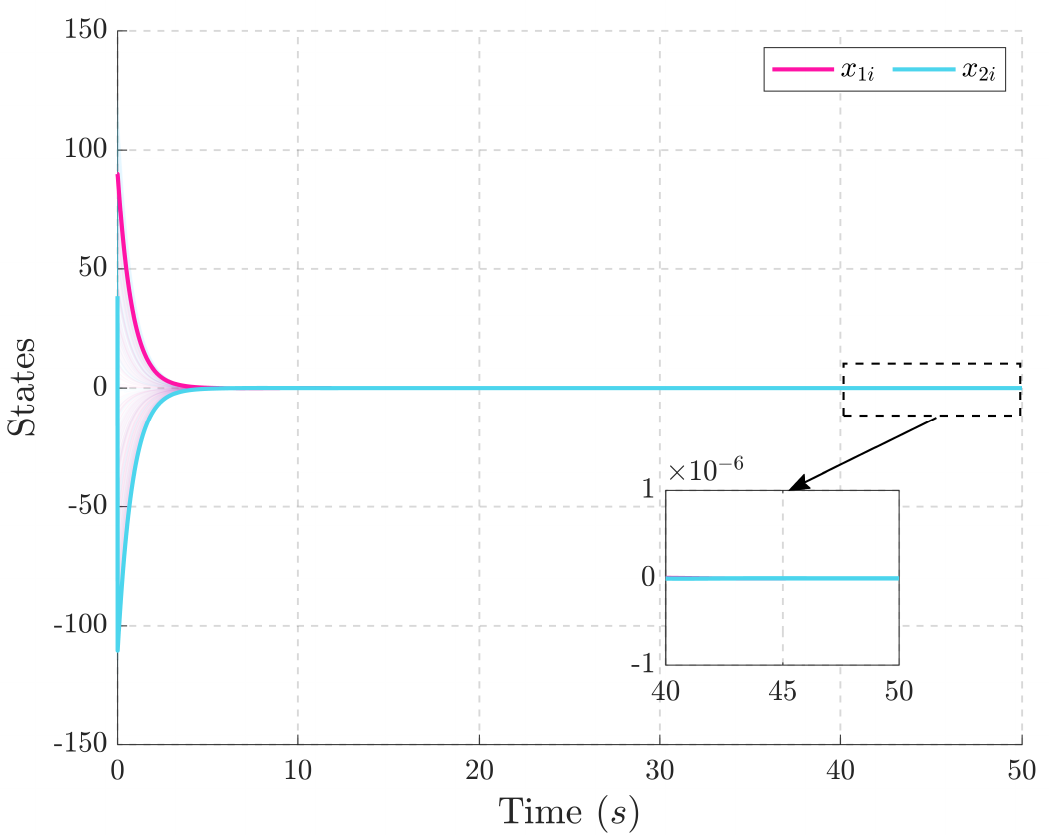}}
	\subfloat[\centering Evolution of closed-loop subsystems with known $D_i$\label{fig:acad known}]{
		\includegraphics[width=0.33\linewidth]{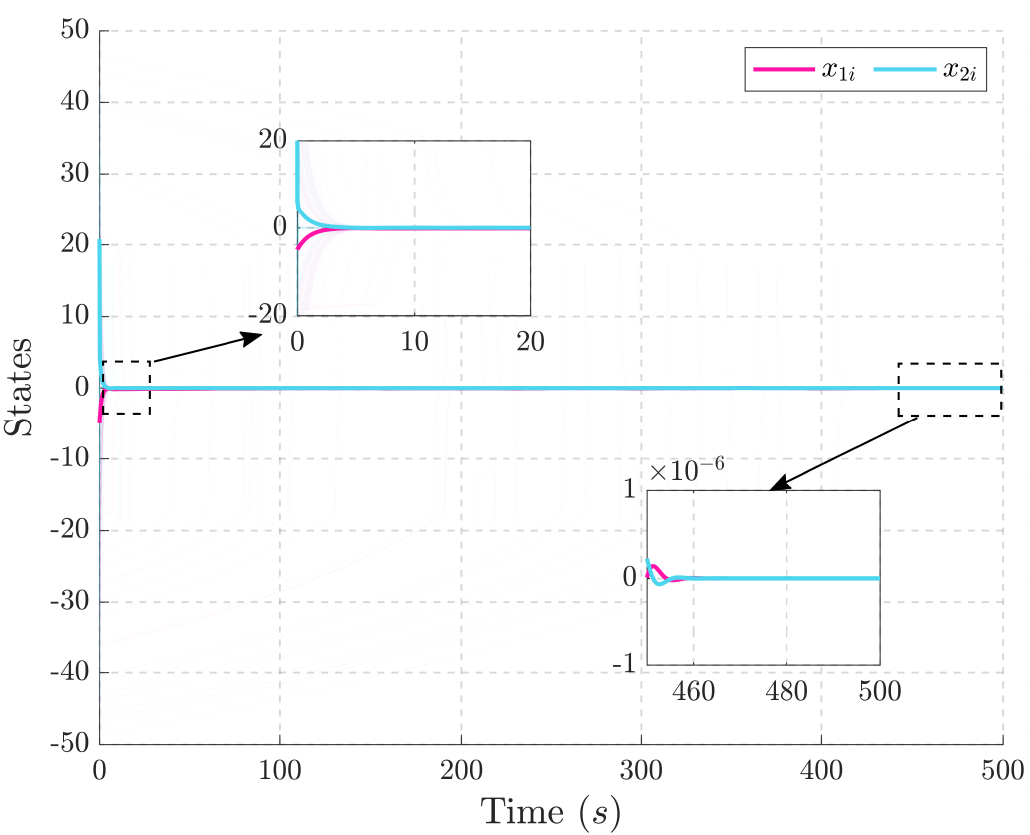}}
	\caption{\textbf{Academic system:}
		Illustration of the subsystems’ evolution under the synthesized controller~\eqref{eq:u Academic unknown} (for unknown $D_i$) or \eqref{eq:u Academic known} (for known $D_i$), where the states of a representative subsystem are highlighted in bold and the remaining subsystems are shown in a faded style. Fig.~(b) corresponds to the case where $D_i$ is unknown and Fig.~(c) corresponds to the case where $D_i$ is known, whereas Fig.~(a) demonstrates the open-loop behavior of the subsystems.}
	\label{fig:acad}
\end{figure*}	

\textbf{Known Upper Bound on $\pmb{\|D_i\|_2}$.} Considering {$\|D_i\|_2\leq 0.04$ and $\mathtt{Card}(\mathbf M_i)= 1$}, where each subsystem is affected by its preceding subsystem, forming a cascade topology, we gather {$T = 25$} samples from each subsystem using a sampling time {$\tau = 0.001$} and {compute $\Lambda_i\Lambda_i^\top = 0.75\times 10^{-4}\mathds I_3$, {given the interval of $\Delta_i$,} as proposed in Remark~\ref{rem:noise}}. Following the steps of Algorithm~\ref{Alg:1} with {$\kappa_i = 0.1$ and $\vartheta_i = 0.8$} yields
\begin{align}
	P_i =& \begin{bmatrix}
		365.8842  &  1.3187  & 78.7867\\
		\star & 330.5682 & 103.2651\\
		\star & \star & 448.7929
	\end{bmatrix}\!\!,\notag\\
	\Vs = &~365.8842\,x_{1_i}^2 + 2.6373\,x_{1_i}x_{2_i} + 157.5735\,x_{1_i}x_{3_i} \notag\\
	&+ 330.5682\,x_{2_i}^2 \!+\! 206.5303\,x_{2_i}x_{3_i} \!+\! 448.7929\,x_{3_i}^2,\notag\\
	u_{1_i} =& ~0.12079\,x_{1_i}^2 - 0.47061\,x_{1_i}x_{2_i} + 0.082367\,x_{1_i}x_{3_i} \notag\\
	&+ 0.16851\,x_{2_i}^2 + 0.88141\,x_{2_i}x_{3_i} - 0.77182\,x_{3_i}^2 \notag\\
	&- 348.8162\,x_{1_i} - 18.0649\,x_{2_i} - 11.1064\,x_{3_i},\notag\\
	u_{2_i} = & ~1.6967\,x_{1_i}^2 + 1.7382\,x_{1_i}x_{2_i} + 3.0714\,x_{1_i}x_{3_i}  \notag\\
	&+ 0.61543\,x_{2_i}^2 + 2.1141\,x_{2_i}x_{3_i} - 0.088252\,x_{3_i}^2 \notag\\
	&+ 0.99924\,x_{1_i} - 319.1249\,x_{2_i} - 37.1072\,x_{3_i},\notag\\
	u_{3_i} = & -0.52565\,x_{1_i}^2 - 3.5244\,x_{1_i}x_{2_i} - 0.23862\,x_{1_i}x_{3_i} \notag\\
	&- 1.8931\,x_{2_i}^2 - 0.89654\,x_{2_i}x_{3_i} + 0.15538\,x_{3_i}^2 \notag\\
	&+ 9.2814\,x_{1_i} - 0.10227\,x_{2_i} - 308.8201\,x_{3_i},\label{eq:u Lorenz unknown}
\end{align}
with $\underline{\alpha}_i = 254.7642,$ $\overline{\alpha}_i = 537.23,$ and $\rho_i = 1.0745$. {According to \eqref{eq:SGC}, these values satisfy condition~\eqref{eq:con spectral} with $r(\Omega) \leq \|\Omega\|_{1,1}\leq 0.0422<1$,} where
	\begin{align*}
		\Omega = 
		\begin{bmatrix}
			0 & 0 & 0 & 0 & 0  & \ldots \\
			0.0422 & 0 & 0 & 0 & 0  & \ldots \\
			0 & 0.0422 & 0 & 0 & 0 & \ldots \\
			0 & 0 & 0.0422 & 0 & 0 &  \ldots \\
			\vdots & \ddots & \ddots & \ddots & \ddots  & \ddots
		\end{bmatrix}\!\!.
	\end{align*}	
	Consequently, the CLF and {UGA-stabilizing} controller for the entire network are constructed by $\V = \sum_{i\in\N^+}\mu_i\Vs$ and $u =(u_i)_{i\in\N^+},$ respectively.

\textbf{Known Matrix $\pmb{D_i}$.} We consider a higher connectivity degree with $\mathtt{Card}(\mathbf M_i) =1000$, where each subsystem is influenced by $1000$ subsequent subsystems, {\ie, $\mathbf M_i=\{i+1,\ldots,i+1000\}$,} when $D_i$ is assumed to be available. Then, with a sampling time {$\tau = 0.001$}, we collect {$T = 80$} samples from each subsystem, which results in {$\Lambda_i\Lambda_i^\top = 2.4\times 10^{-4}\mathds I_3$}, based on Remark~\ref{rem:noise}. We follow the procedure in Algorithm~\ref{Alg:1} with {$\kappa_i = 2$ and $\vartheta_i = 1$,} and obtain
\begin{align}
	P_i =&\begin{bmatrix}
		271.6894  & -3.4691 &  20.2664\\
		\star & 275.7661  & 37.4651\\
		\star  & \star & 289.2950
	\end{bmatrix}\!\!,\notag\\
	\Vs =& ~271.6894\,x_{1_i}^2 - 6.9382\,x_{1_i}x_{2_i} + 40.5328\,x_{1_i}x_{3_i} \notag\\
	&+ 275.7661\,x_{2_i}^2 + 74.9303\,x_{2_i}x_{3_i} + 289.295\,x_{3_i}^2,\notag\\
	u_{1_i} =& ~0.049688\,x_{1_i}^2 - 1.446\,x_{1_i}x_{2_i} + 0.08925\,x_{1_i}x_{3_i} \notag\\
	&- 0.42295\,x_{2_i}^2 - 1.7534\,x_{2_i}x_{3_i} + 2.5993\,x_{3_i}^2 \notag\\
	&- 312.6275\,x_{1_i} + 14.653\,x_{2_i} + 16.2998\,x_{3_i}, \notag\\
	u_{2_i} = & ~1.1647\,x_{1_i}^2 + 1.7924\,x_{1_i}x_{2_i} - 1.0954\,x_{1_i}x_{3_i} \notag\\
	&- 0.0011767\,x_{2_i}^2 + 0.24504\,x_{2_i}x_{3_i} \!+\! 0.14335\,x_{3_i}^2 \notag\\
	&+ 28.7541\,x_{1_i} - 312.7414\,x_{2_i} - 9.6052\,x_{3_i},\notag\\
	u_{3_i} = & -0.075981\,x_{1_i}^2 + 1.9493\,x_{1_i}x_{2_i} - 2.6439\,x_{1_i}x_{3_i} \notag\\
	&+ 0.40238\,x_{2_i}^2 - 0.29296\,x_{2_i}x_{3_i} - 0.19753\,x_{3_i}^2 \notag\\
	&+ 11.3175\,x_{1_i} + 0.41759\,x_{2_i} - 281.0524\,x_{3_i},\label{eq:u Lorenz known}
\end{align}
with $\underline{\alpha}_i = 236.9447,$ $\overline{\alpha}_i = 324.1947,$ and $\rho_i = 0.3242$. {The satisfaction of condition~\eqref{eq:con spectral} using~\eqref{eq:SGC} is verified with $r(\Omega) \leq \|\Omega\|_{1,1}\leq 0.6841<1$, where
	\begin{align*}
		\Omega = 10^{-4}\!\times\!
		\begin{bNiceMatrix}[first-row]
			&\Hdotsfor{3}[line-style=mybrace, shorten=0pt]^{\ensuremath{j\in\{2,\ldots,1001\}}}\\
			0 & 6.8411 & \ldots & 6.8411 & 0  & \ldots \\
			0 & 0 & 6.8411 & \ldots & 6.8411  & \ldots \\
			0 & 0 & 0 & 6.8411 & \ldots & \ldots \\
			0 & 0 & 0& 0 & 6.8411 &  \ldots \\
			\vdots & \ddots & \ddots & \ddots & \ddots  & \ddots
		\end{bNiceMatrix}\!.
	\end{align*}
	Thereby, the CLF and {UGA-stabilizing} controller for the entire network are designed by $\V = \sum_{i\in\N^+}\mu_i\Vs$ and $u =(u_i)_{i\in\N^+},$ respectively.} Fig.~\ref{fig:lor} illustrates the evolution of representative subsystems under the designed controllers~\eqref{eq:u Lorenz unknown} and \eqref{eq:u Lorenz known}, starting from large initial conditions on the order of $10^6$, while demonstrating asymptotic convergence of their states to the origin.

\subsection{Academic System}
\label{subsec:academic}
The final case study is concerned with an academic example endowed with a \emph{state-dependent} control input matrix to validate the practical applicability of our proposed data-driven method. We consider an infinite network of  planar subsystems, where each subsystem is affected by subsystems $j\in\mathbf M_i$.

{Each individual subsystem is described by~\eqref{eq:subsystem reform}, with $x_i=[x_{1_i}~x_{2_i}]^\top$, $u_i=[u_{1_i}~u_{2_i}]^\top$,
			$A_i^* =
			\begin{bmatrix}
				0 & 1 & 0\\
				-1 & -1 & 1
			\end{bmatrix}$, and $B_i^* =[0~1]^\top$. Considering $w_i=(w_{ij})_{j\in\mathbf M_i}=(x_j)_{j\in\mathbf M_i}$ and $D_i=(D_{ij})_{j\in\mathbf M_i}$ according to~\eqref{eq:int constraint} and \eqref{eq:D partition}, respectively, where
			$D_{ij} =  10^{-4}\times\begin{bmatrix}
				0 & 0\\
				3 & 0
			\end{bmatrix}$, the influence of other subsystems is captured by
			$$
			D_i w_i
			=
			10^{-4}\times
			\begin{bmatrix}
				0\\[1mm]
				3\sum_{j\in\mathbf M_i}x_{1_j}
			\end{bmatrix}\!\!.
			$$
			Additionally, $\mathcal F_i^*(x_i) = [x_{1_i}~ x_{2_i}~ x_{1_i}x_{2_i}]^\top\!$ and $\mathcal G_i^*(x_i) =x_{2_i}$, which, altogether with the matrices describing the dynamics, are all unknown.} However, we {know the monomials in $\mathcal F_i^*(x_i)$ and $\mathcal G_i^*(x_i)$ have maximum degrees 2 and 1, respectively. Hence, we construct extended dictionaries as} $\mathcal F_i(x_i) = [x_{1_i}~ x_{2_i}~ x_{1_i}x_{2_i}~ x_{1_i}^2~ x_{2_i}^2]^\top$, including the combination of all monomials up to degree 2, and $\mathcal G_i(x_i) = [1~x_{1_i}~x_{2_i}]^\top$, including the combination of all monomials up to degree 1. Based on the dictionary $\mathcal F_i(x_i),$ we consider the transformation matrix as $\Psi_i(x_i) = [\mathds I_2~[0~x_{1_i}]^\top~\operatorname{diag}(x_{1_i},x_{2_i})]^\top$. Moreover, we assume each component of $\Delta_i$ lies within the interval $[-0.01, 0.01]$.
		
		\textbf{Known Upper Bound on $\pmb{\|D_i\|_2}$.} Let $\|D_i\|_2 \leq 0.06$ for an infinite network with an interconnection topology in which each subsystem is influenced by 5 subsequent subsystems, implying that {$\mathbf M_i=\{i+1,\ldots,i+5\}$ and $\mathtt{Card}(\mathbf M_i)=5$}. We first store $T = 21$ samples from each unknown subsystem with sampling time $\tau = 0.008$, as outlined in Algorithm~\ref{Alg:1}. {Given the interval of $\Delta_i$,} we compute $\Lambda_i\Lambda_i^\top = 0.0042\mathds I_2$ as per Remark~\ref{rem:noise}. Then, we fix $\kappa_i = 0.5$ and $\vartheta_i = 0.5,$ and follow the rest of the steps in Algorithm~\ref{Alg:1}, obtaining
		\begin{align}
			P_i =&10^6\times\begin{bmatrix}
				1.5811  &  0.4921\\
				\star  &  0.3779
			\end{bmatrix}\!\!,\notag\\
			\Vs =& ~1581080.5605\,x_{1_i}^2 + 984282.7422\,x_{1_i}x_{2_i} \notag\\
			&+ 377901.9491\,x_{2_i}^2,\notag\\
			u_i = & -0.063245\,x_{1_i}^2 - 24.7236\,x_{1_i}x_{2_i} - 20.0672\,x_{2_i}^2 \notag\\
			&+ 0.29349\,x_{1_i} + 0.41802\,x_{2_i},\label{eq:u Academic unknown}
		\end{align}
		along with $\underline{\alpha}_i = 2.0224\times 10^5,$ $\overline{\alpha}_i = 1.7567\times 10^6,$ and $\rho_i = 1.2649\times 10^4.$ {We accordingly fulfill condition~\eqref{eq:con spectral} with $r(\Omega) \leq \|\Omega\|_{1,1}\leq 0.6254<1$, obtained via~\eqref{eq:SGC}, where
			\begin{align*}
				\Omega = 
				\begin{bNiceMatrix}[first-row]
					&\Hdotsfor{3}[line-style=mybrace, shorten=0pt]^{\ensuremath{j\in\{2,\ldots,6\}}}\\
					0 & 0.1251 & \ldots & 0.1251 & 0  & \ldots \\
					0 & 0 & 0.1251 & \ldots & 0.1251  & \ldots \\
					0 & 0 & 0 & 0.1251 & \ldots & \ldots \\
					0 & 0 & 0& 0 & 0.1251 &  \ldots \\
					\vdots & \ddots & \ddots & \ddots & \ddots  & \ddots
				\end{bNiceMatrix}\!.
			\end{align*}
			By satisfying compositional condition~\eqref{eq:con spectral}, the CLF and {UGA-stabilizing} controller for the entire network are designed as $\V = \sum_{i\in\N^+}\mu_i\Vs$ and $u =(u_i)_{i\in\N^+},$ respectively.}
		
		\textbf{Known  Matrix $\pmb{D_i}$.} For a denser interconnection, where each subsystem is influenced by $500$ subsequent subsystems, {namely $\mathbf M_i=\{i+1,\ldots,i+500\}$ and $\mathtt{Card}(\mathbf M_i)=500$,} we assume $D_i$ is known. Following the provided steps in Algorithm~\ref{Alg:1}, we gather $T = 50$ samples with sampling time $\tau = 0.01$, resulting in $\Lambda_i\Lambda_i^\top = 0.01\mathds I_2$ as per Remark~\ref{rem:noise}. By setting $\kappa_i = 0.5$ and $\vartheta_i = 0.5$, we design
		\begin{align}
			P_i =&10^6\times\begin{bmatrix}
				2.3486  &  1.2416\\
				\star  &  1.2594
			\end{bmatrix}\!\!,\notag\\
			\Vs =& ~2348564.4205\,x_{1_i}^2 + 2483253.4089\,x_{1_i}x_{2_i} \notag\\
			&+ 1259357.4198\,x_{2_i}^2,\notag\\
			u_i = &-0.42206\,x_{1_i}^2 - 15.3001\,x_{1_i}x_{2_i} - 14.6784\,x_{2_i}^2 \notag\\
			&- 0.66122\,x_{1_i} - 1.496\,x_{2_i},\label{eq:u Academic known}
		\end{align}
		together with $\underline{\alpha}_i = 4.4815\times 10^5,$ $\overline{\alpha}_i = 3.1598\times 10^6,$ and $\rho_i = 284.3797$, {satisfying condition~\eqref{eq:con spectral} with $r(\Omega) \leq \|\Omega\|_{1,1}\leq 0.6346<1$, obtained using~\eqref{eq:SGC}, where
			\begin{align*}
				\Omega = 
				\begin{bNiceMatrix}[first-row]
					&\Hdotsfor{3}[line-style=mybrace, shorten=0pt]^{\ensuremath{j\in\{2,\ldots,501\}}}\\
					0 & 0.0013 & \ldots & 0.0013 & 0  & \ldots \\
					0 & 0 & 0.0013 & \ldots & 0.0013  & \ldots \\
					0 & 0 & 0 & 0.0013 & \ldots & \ldots \\
					0 & 0 & 0& 0 & 0.0013 &  \ldots \\
					\vdots & \ddots & \ddots & \ddots & \ddots  & \ddots
				\end{bNiceMatrix}\!.
			\end{align*}
			Hence, the CLF and UGA-stabilizing controller for the entire network are obtained as $\V = \sum_{i\in\N^+}\mu_i\Vs$ and $u =(u_i)_{i\in\N^+},$ respectively.} Fig.~\ref{fig:acad} depicts the evolution of representative subsystems under the designed controllers~\eqref{eq:u Academic unknown} and \eqref{eq:u Academic known}, showing that their states converge asymptotically to the origin.
		
		\section{Conclusion}\label{sec:conclusion}
		This paper developed a scalable data-driven control framework for infinite networks composed of countably many interconnected nonlinear polynomial subsystems with \emph{unknown} dynamics. By relying solely on a {single noisy} input–state trajectory from each subsystem, we constructed ISS Lyapunov functions and their corresponding input-to-state stabilizing controllers for individual subsystems without requiring explicit model knowledge. Leveraging these local data-driven certificates, a compositional CLF and its associated decentralized UGA-stabilizing controller were synthesized via a small-gain-based interconnection analysis, guaranteeing UGAS of the entire infinite network with unknown dynamics. The proposed methodology explicitly addresses the infinite-dimensional nature of the problem while accommodating unknown subsystem dynamics, thereby overcoming fundamental limitations of existing model-based methods, and approaches restricted to finite networks. The theoretical guarantees were validated through multiple representative case studies, demonstrating the versatility and robustness of the proposed framework. A promising direction for future research is to extend the approach to broader classes of nonlinear dynamics beyond polynomial structures.
\vspace{-0.3cm}
\bibliographystyle{agsm}
\bibliography{biblio}

\begin{authorbio}[Mahdieh]{Mahdieh Zaker} received her B.Sc. from K. N. Toosi University of Technology, Tehran, Iran, in 2019, and her M.Sc. from Amirkabir University of Technology (Tehran Polytechnic), Tehran, Iran, both in Electrical Engineering, control major. She is currently a PhD student in the School of Computing at Newcastle University, UK. She is the Best Repeatability Prize Finalist at the $28^{\text{th}}$ ACM International Conference on Hybrid Systems: Computation and Control (HSCC), 2025. Her research interests are (nonlinear) control and systems theory, data-driven techniques, large-scale systems, and formal methods.\vspace{-0.7cm}
\end{authorbio}

\begin{authorbio}[Andrii]{Andrii Mironchenko} was born in 1986 in Odesa, Ukraine. He received his Ph.D. degree in mathematics from the University of Bremen, Germany (2012), and a habilitation degree from the University of Passau, Germany (2023). He was a Postdoctoral Fellow of the Japan Society for Promotion of Science (2013–2014). Since December 2024, he has been with the Department of Mathematics, University of Bayreuth, Germany.
	
Dr. Mironchenko is the author of the monograph “Input-to-State Stability: Theory and Applications” (Springer, 2023). He is an Associate Editor in Systems \& Control Letters (2023 --) and IEEE Transactions on Automatic Control (2026 --). A. Mironchenko is a co-founder and co-organizer of the biennial Workshop series “Stability and Control of Infinite-Dimensional Systems” (SCINDIS, 2016 --) and ISS Online Seminar (2021 --). He is a recipient of IEEE CSS George S. Axelby Outstanding Paper Award (2023), Outstanding Habilitation Award of the University of Passau (2024), Heisenberg grant (2024), and von Kaven Award (2025) from the German Research Foundation.
	
His research interests include stability theory, nonlinear systems theory, distributed parameter systems, hybrid systems, and applications of control theory to biological systems and distributed control. \vspace{-0.7cm}
\end{authorbio}

\begin{authorbio}[Amy]{Amy Nejati} is an Assistant Professor in the School of Computing at Newcastle University in the United Kingdom. Prior to this, she was a Postdoctoral Associate at the Max Planck Institute for Software Systems in Germany from July 2023 to May 2024. She also served as a Senior Researcher in the Computer Science Department at the Ludwig Maximilian University of Munich (LMU) from November 2022 to June 2023. She received a PhD in Electrical Engineering from the Technical University of Munich (TUM) in 2023. She has received the B.Sc. and M.Sc. degrees both in Electrical Engineering. Her research mainly focuses on developing efficient (data-driven) techniques to design and control highly-reliable autonomous systems while providing mathematical guarantees. She was named a CPS Rising Star 2024, and her paper was selected as a Best Repeatability Prize Finalist at ACM HSCC 2025.\vspace{-0.5cm}
\end{authorbio}
\newpage
\begin{authorbio}[Abolfazl]{Abolfazl Lavaei} is an Assistant Professor in the School of Computing at Newcastle University, United Kingdom. Between January 2021 and July 2022, he was a Postdoctoral Associate in the Institute for Dynamic Systems and Control at ETH Zurich, Switzerland. He was also a Postdoctoral Researcher in the Department of Computer Science at LMU Munich, Germany, between November 2019 and January 2021. He received the Ph.D. degree in Electrical Engineering from the Technical University of Munich (TUM), Germany, in 2019. He obtained the M.Sc. degree in Aerospace Engineering with specialization in Flight Dynamics and Control from the University of Tehran (UT), Iran, in 2014. He is the recipient of several international awards in the acknowledgment of his work including  Best Repeatability Prize (Finalist) at the ACM HSCC 2025, IFAC ADHS 2024, and IFAC ADHS 2021, HSCC Best Demo/Poster Awards 2022 and 2020, IFAC Young Author Award Finalist 2019, and Best Graduate Student Award 2014 at University of Tehran with the full GPA (20/20). His research interests revolve around the intersection of Control Theory, Formal Methods in Computer Science, and Statistical Learning Theory.
\end{authorbio}

\end{document}